\newcommand{\subscript}[2]{$#1 _ #2$}
\newtheorem{example}{Example}
\newtheorem{note}{Note}
\newtheorem{thm}{Theorem}
\newtheorem{lem}{Lemma}
\newtheorem{cor}{Corollary}
\begin{document}
\title{Coded Caching for Hierarchical Two-Layer Networks with Coded Placement\\
% \thanks{This work was supported partly by the Science and Engineering Research Board (SERB) of Department of Science and Technology (DST), Government of India, through J.C Bose National Fellowship to B. Sundar Rajan.}
} 

\author{\IEEEauthorblockN{Rajlaxmi~Pandey, Charul Rajput, and B. Sundar Rajan}\\
	\IEEEauthorblockA{\textit{Department of Electrical Communication Engineering} \\
		\textit{Indian Institute of Science, Bangalore, India} \\
		\{rajlaxmip,charulrajput,bsrajan\}@iisc.ac.in}
}

\maketitle

\begin{abstract} 
% We consider two layered hierarchical coded caching problem introduced in [N. Karamchandani, U. Niesen, M. A. Maddah-Ali, and S. N. Diggavi, “Hierarchical coded caching,” IEEE Trans. Inf. Theory, 2016], in which a server is connected to $K_1$ mirrors, and each mirror is connected to $K_2$ users. The mirrors and the users are equipped with the cache of size $M_1$ and $M_2$, respectively. We propose a hierarchical coded caching scheme with coded placements that perform better than the existing schemes. In order to ensure a fair comparison with existing schemes, we introduce the notion of composite rate, defined as $\overline{R}=R_1+K_1R_2$, which consists of the rate from server to mirrors $R_1$ and the rate from mirror to users $R_2$. The composite rate has not been discussed before in literature and it represents the total consumed bandwidth in the system. Therefore, it is more appropriate to consider the composite rate along with $R_1$ and $R_2$. For the proposed scheme, we show a trade-off between the global memory $\overline{M}=K_1M_1+K_1K_2M_2$ of the system and the composite rate. We compare the proposed scheme with the existing hierarchical coded caching schemes using the proposed parameter “composite rate.” Finally, we propose another scheme for the specific case of a single mirror, which performs better for this case.

We examine a two-layered hierarchical coded caching problem, a configuration addressed in existing research. This involves a server connected to $K_1$ mirrors, each of which serves $K_2$ users. The mirrors and the users are equipped with caches of size $M_1$ and $M_2$, respectively. We propose a hierarchical coded caching scheme with coded placements that outperforms existing schemes. To ensure a fair comparison, we introduce the notion of composite rate, defined as $\overline{R} = R_1 + K_1 R_2$, where $R_1$ is the rate from the server to mirrors and $R_2$ is the rate from mirrors to users. The composite rate has not been discussed before in the literature and is pertinent when mirrors transmit with different carrier frequencies. For the proposed scheme, we show a trade-off between the global memory $\overline{M}=K_1M_1+K_1K_2M_2$ of the system and the composite rate and compare with the existing schemes. Additionally, we conduct this comparative analysis by plotting $R_1$ + $R_2$ against global memory, which is particularly beneficial for systems wherein each mirror can utilize the same carrier frequency, given their significant spatial separation.  Additionally, we propose an optimized scheme for the specific case of a single mirror, showing improved performance in this scenario.

\end{abstract} 
\begin{IEEEkeywords}
 Hierarchical coded caching, Transmission rate, Memory size, Coded Placement.
\end{IEEEkeywords}
\IEEEpeerreviewmaketitle

\section{Introduction and Background}
The world has witnessed a huge exponential growth in data demand, especially in the last decade, which is projected to grow more abruptly in the near future owing to the increased dependence on smartphones, electronic gadgets, etc. Video-on-demand is the biggest contributor to data traffic. It is observed that the traffic is not uniform throughout the day. Coded caching has been proposed as a promising technique to mitigate data traffic during peak hours. The seminal work of \cite{maddah2014fundamental} proposed combined placement and delivery phases to characterize both the cache placement and transmission. In the placement phase, which mostly takes place during off-peak traffic hours, contents are put in the user's cache by the central server. The transmission of the contents of the demanded files not cached by the users takes place during the delivery phase. The "transmission rate" refers to the number of coded symbols the server needs to transmit to fulfill the demands of all the users.

The set-up of \cite{maddah2014fundamental} consists of a central server having access to a library of \emph{N} files, which is connected to \emph{K} users through an error-free shared link.
Each user has a cache size of \emph{M} files. For this set-up, a ({\emph{K}, \emph{M}, \emph{N}) coded caching scheme is given in \cite{maddah2014fundamental}, which we refer to as MN (Maddah-Ali, Niesen) scheme henceforth. For placement, each file is divided into \emph{F} packets, and only up to \emph{MF} packets can be stored in each user's cache. \emph{F} is termed as the sub-packetization level. 
In \cite{maddah2014decentralized}, a coded caching scheme (MN decentralized scheme) was proposed to address the scenarios when a central server is unavailable, and the users must place the cache contents themselves. As the number of users increases, the sub-packetization level $F$ increases exponentially, making the original MN scheme \cite{maddah2014fundamental} unfeasible. Hence, to reduce the sub-packetization level, in \cite{yan2017placement}, the authors proposed a scheme based on a placement delivery array (PDA) to characterize both the placement and delivery phases simultaneously.
Coded caching has been studied in a variety of settings, such as users with non-uniform demands \cite{niesen2016coded,ji2017order}, multi-access networks \cite{hachem2017coded}, combinatorial multi-access caching\cite{brunero2022fundamental}, shared cache networks \cite{parrinello2019fundamental}, Multi-server networks \cite{shariatpanahi2016multi}, users with asynchronous demands \cite{lampiris2021coded}, land mobile satellite systems\cite{zhao2022coded}, online caching\cite{pedarsani2015online}, hierarchical networks\cite{karamchandani2016hierarchical,zhang2016decentralized,wang2019reduce,liu2021intension,kong2022centralized}.

In practice, the caching systems comprise multiple layers between the central server and end-users. The in-between devices act as mirrors for connectivity between various layers.
In \cite{karamchandani2016hierarchical}, the authors considered a setting where the server is connected through an error-free broadcast link to $K_{1}$ mirror sites. Further, each
mirror connects to $K_{2}$ users through an error-free broadcast link. As a result, the total number of users $K$ in the network is equal to the product of $K_{1}$ and $K_{2}$, i.e., $K=K_{1}K_{2}$.
Let $R_{1}$ and $R_{2}$ denote the transmitted rate (normalized by the file size) over the shared link from the server to the mirrors and mirrors to the users, respectively, to satisfy the demand of the users. For this setting, a coded caching scheme based on decentralized placement was proposed. Later in \cite{kong2022centralized}, a hierarchical coded caching scheme was proposed for the centralized server, which is based on the MN scheme.

\subsection{System model}
A ($\emph{$K_{1}$},\emph{$K_{2}$};\emph {$M_{1}$},\emph{$M_{2}$};\emph{N}$) hierarchical coded caching is illustrated in Fig. \ref{fig:setting}. The server has a collection of $N$ files, each of size $F$ bits denoted by $W_1,W_2,\hdots,W_N$. There are $\emph{$K_{1}$}$ mirrors, and to each mirror, $\emph{$K_{2}$}$ users are connected through an error-free shared link. Each mirror and user has a cache memory of size ${M_1}$  and ${M_2}$ files, respectively. 

The ($\emph{$K_{1}$},\emph{$K_{2}$};\emph{$M_{1}$},\emph{$M_{2}$};\emph{N}$) hierarchical coded caching problem operates in the following two phases:

\begin{enumerate}[label=(\alph*)]
\item \emph{Placement phase}: In this phase, the server places the contents of files in the mirrors' and users' caches without the knowledge of the users' demands.
\item \emph{Delivery phase}:
\begin{enumerate}[label=(\roman*)]
\item \emph{Delivery by server}: Users reveal their demands to the server. Then, the server broadcasts coded or uncoded multi-cast messages of size at most $R_1F$ bits to each mirror site.
\item \emph{Delivery by mirror}: Each mirror generates a coded or uncoded message of size at most $R_2F$ bits based on the transmission by the server and its cache content and then sends them to the attached users.
\end{enumerate}
\end{enumerate}

The \textit{global memory} of the system is defined as $\overline{M}=K_1M_1+K_1K_2M_2$, and the \textit{composite rate} is defined as $\overline{R}=R_1+K_1R_2$.
\begin{figure}[t!]
	\begin{center}
		\captionsetup{justification=centering}
		\includegraphics[width=0.7\textwidth]{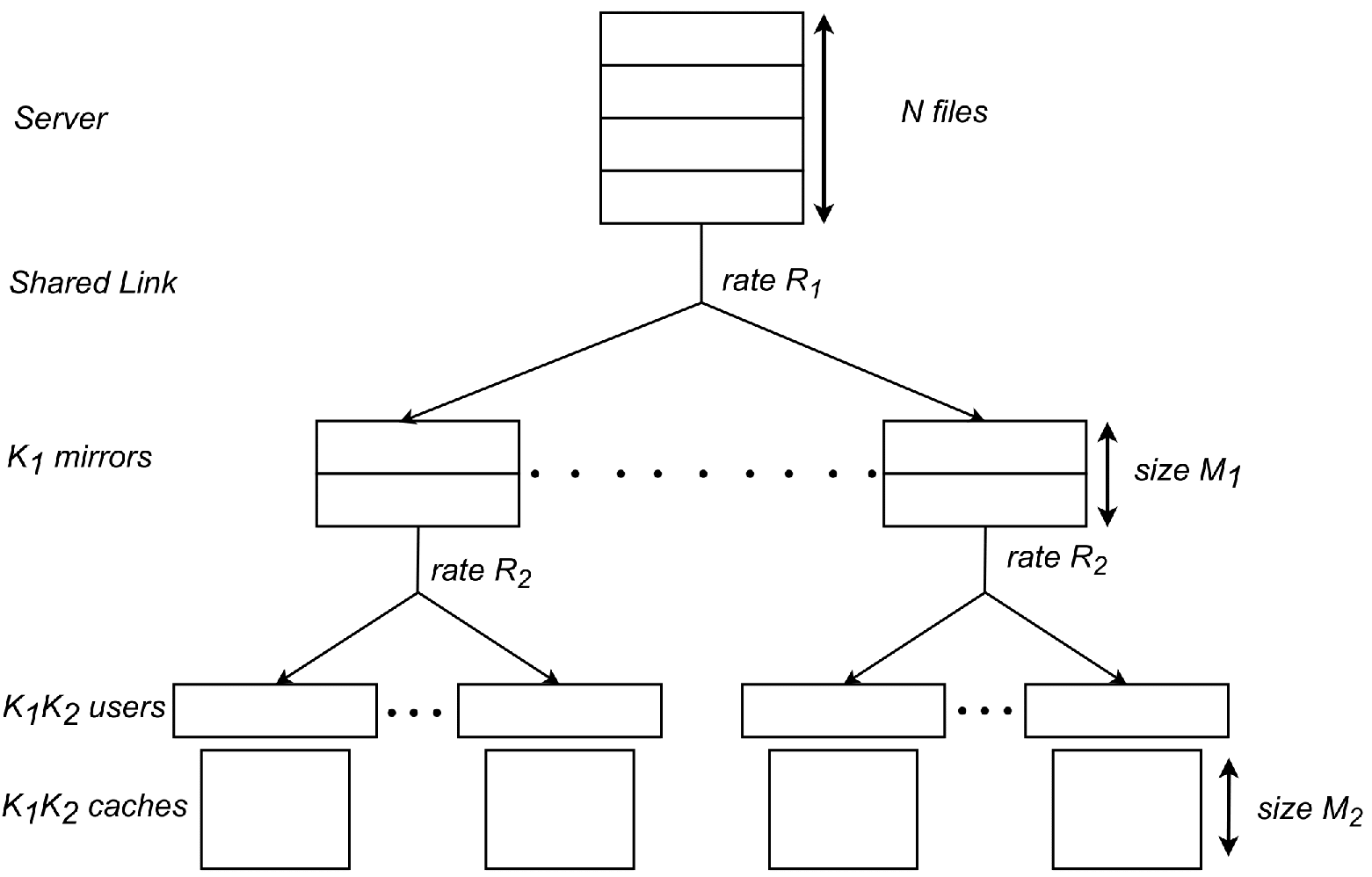}
		\caption{Hierarchical two-layer network.}
		\label{fig:setting}
	\end{center}
\end{figure}

\subsection{Prior Work}

Hierarchical coded caching was first studied in \cite{karamchandani2016hierarchical}, and we refer to the scheme as the KNMD (Karamchandani, Niesen, Maddah-Ali, Diggavi) scheme. The KNMD scheme divides a file sharing system into two subsystems based on fixed parameters $\alpha$ and $\beta$. The first subsystem includes mirrors' cache memory and users' cache memory, responsible for caching and delivering the $\alpha$  parts of each file. The second subsystem includes users' cache memory only, responsible for caching and delivering the $(1-\alpha)$ parts of each file. A $\beta$ fraction of each user's cache memory is used in the first subsystem, and $(1-\beta)$ fraction is used in the second subsystem. Both subsystems use the same single-layer decentralized coded caching scheme with different parameters to optimize efficiency. %\textcolor{red}{The transmission rates of the first and second layers are given as,}

In the KNMD scheme, the server ignores the users' cache contents when sending messages to mirror sites, which can cause redundant communication costs in the first layer.
In \cite{8481555}, referred to as the ZWXWLL (Zhang, Wang, Xiao, Wu, Liang, Li) scheme, the correlations between the prestored contents in two layers of a decentralized caching system are utilized to optimize data placement and reduce the transmission rate in the first layer while maintaining the second layer's transmission rate through an optimized delivery strategy. 
In another work \cite{7763243}, the authors introduced a centralized scheme for this hierarchical two-layer network where they proposed a joint caching scheme in order to consider the caches of both layers, thus eliminating unnecessary transmission. We refer to this scheme as the ZWXWL (Zhang, Wang, Xiao, Wu, Li) scheme.

 Following this, in \cite{wang2019reduce}, the authors created a caching scheme, referred to as the WWCY (Wang, Wu, Chen, Yin) scheme that utilizes idle time resources by allowing simultaneous transmission between the server and relays. 
The WWCY scheme exhibits a pattern where authors recommend the same memory ranges and optimal values of $\alpha$ and $\beta$ for each region, similar to the KNMD scheme. 
 Notably, the WWCY scheme achieves the minimum load of $R_{2}$ when employing uncoded placement, and furthermore, the load of the first layer $R_{1}$  is lower than that of the ZWXWLL scheme.

 In \cite{liu2021intension}, referred to as the LZX (Liu, Zhang, Xie) scheme, the tension between the rates of different layers was studied using a toy model and derived new lower bounds and achievable schemes. They considered the case of a single mirror with two users attached. 

A new combination structure referred to as hierarchical placement delivery array (HPDA), inspired by PDA, was proposed in \cite{kong2022centralized}, referred to as the KWC (Kong, Wu, Cheng) scheme, which could be used to characterize the placement and delivery phase of a hierarchical coded caching scheme.

\subsection{Our Contributions}

The proposed scheme is inspired by the scheme in \cite{chen2016fundamental}, which we refer to as the CFL (Chen, Fan, Letaief) scheme henceforth and MN \cite{maddah2014fundamental} coded caching scheme. We divide each file into two sub-files of sizes $F_1$ and $F_2$ bits. For the first sub-file, we are using a variation of the CFL scheme with sub-packetization level $K$, and for the second sub-file, we are using the MN scheme with $t\in \{1,2,\cdots, K_{1}K_{2}\}$. Since the CFL scheme gives a better rate when $N \leq K$, the proposed scheme also performs better for the case when $N \leq K_1K_2$. We compare our scheme with the state-of-the-art with a similar setting.
Our contributions are summarized as follows:
\begin{itemize}
 \item In the hierarchical coded caching setup, we utilize coded placement techniques inspired by the CFL scheme and propose a new scheme. Our results demonstrate a notable reduction in the required rates for both $R_{1}$ and $R_{2}$. 
       
 \item In order to ensure a fair comparison with existing schemes, we propose the notion of composite rate, $\overline{R}$, and the global memory, $\overline{M}$, and we utilize these parameters as our primary metrics for evaluating the global bandwidth requirements of the schemes. The global memory parameter, $\overline{M}$, reflects the overall memory capacity of the system, while the composite rate, $\overline{R}$, quantifies the total bandwidth present in the system.

\item We compare the performance of our proposed scheme with the state-of-the-art scheme by plotting curves of their respective composite rates against the available global memory. The resulting curve is presented in Figure \ref{fig:CRvsGM}.  Additionally, we examine the trade-off for \( R_1 + R_2 \) with global memory, as illustrated in Figure \ref{R_1+R_2}. We also analyze the trade-offs for \( R_1 \) and \( R_2 \) separately with \( M_1 \) and \( M_2 \), in comparison to existing schemes.

\item  We demonstrate how memory sharing works within our scheme by leveraging the memory of both the mirror's cache ($M_1$) and the users' cache ($M_2$) using a convex hull approach.

\item For the case of only one mirror in the system, we improve our scheme and compare that with the existing scheme in the literature.
       
\end{itemize}

\subsection{Organization}
The rest of the paper is organized as follows: The main results containing the proposed scheme and performance analysis are listed in Section \ref{scheme1}. In Section \ref{compare}, we present an in-depth comparison with the current state of the art. Section \ref{scheme2} introduces an alternative scheme for cases involving a single mirror. Section \ref{scheme2compare} provides a comparative analysis with existing literature for the single mirror case. Concluding remarks are included in Section \ref{conclu}.

\subsection{Notations}

For a positive integer $n$, $[n]$ denotes the set $\{1,2,\cdots,n\}$. For two sets $A$ and $B$, the notation $A \setminus B$ denotes the set of elements in A that are not in B. For a set $A$, the number of elements in it is represented by $|A|$. The binomial coefficient represented as $\binom{n}{k}$ is equal to $\frac{n!}{k!(n-k)!}$, where $n$ and $k$ are positive integers such that $k \leq n$. For a negative integer $k$, the value of $\binom{n}{k}$ is defined as zero. For a set $S$ and a positive integer $t$, the notation ${S \choose t}$ represents the collection of all the subsets of $S$ of size $t$.

\section{Proposed Coded Caching Scheme for Hierarchical Network} \label{scheme1}
 Each file $W_n$ of size $F$ bits is divided into two sub-files $W_n^1$ and $W_n^2$ of size $F_1$ and $F_2$ bits, respectively, where $F_1=\alpha F$ and $F_2 = (1-\alpha) F$ for some $0 \leq \alpha \leq1$, i.e.,
$$W_n \rightarrow W^1_n \quad \text{and} \quad W^2_n, \quad \forall n \in [N].$$ We further divide $W_n^1$ into $K$ mini-subfiles, denoted by $W^1_{n,1}, W^1_{n,2}, \ldots, W^1_{n,K}$ for all $n \in [N]$, and divide $W_n^2$ into ${K \choose t}$ mini-subfiles, denoted by $W^2_{n,S}$, where $S \subset [K]$ and $|S|=t$ for all $n \in [N]$, and $t \in \{1, 2, \ldots, K_1K_2\}$. Clearly, the size of $W^1_{n,i}$ is $\frac{\alpha F}{K}$ for all $n\in [N], i \in [K],$ where $K=K_1 K_2$ and the size of $W^2_{n,S}$ is $\frac{(1-\alpha) F}{{K \choose t}}$,  $ \forall n \in [N]$, $S \subset [K]$ and $|S|=t$. We use the CFL scheme for the subfiles $W_n^1, n\in [N]$ and the MN scheme for the subfiles $W_n^2, n \in [N]$. Since the CFL scheme works for $N \leq K$, in our scheme, we only consider $\alpha=0$ for $N>K$. 

\subsection{Placement phase}
Let the cache content of mirror $m$ be denoted by $\Lambda_m$ for $m \in [K_1]$ and the cache content of user $k$ be denoted by $Z_k$ for $k \in [K]$. Define a set $S_m = \{(m-1)K_2+1, (m-1)K_2+2, \ldots, mK_2 \}$ which contains all the users connected to mirror $m$. This formula ensures that each mirror $m$ has a unique, non-overlapping set of $K_2$ users. The content placement for both the mirror and the users differ depending on $t < K_{2}$ and $t \geq K_{2}$ as follows.

\begin{itemize}
    \item For $t < K_{2}$,
    \begin{equation}
    \Lambda_m = \{ W^1_{1,k} \oplus W^1_{2,k} \oplus \cdots \oplus W^1_{N,k}  \ | \ \forall k \in S_m \} \quad \text{for } m \in [K_1],
         \label{mirrplace}
    \end{equation}
    and
    \begin{equation}
    Z_k = \{ W^2_{n,S} \ | \ S \subseteq [K],\ |S|=t,\ \forall n \in [N] \} \quad \text{for } k \in S.
         \label{userplace}
    \end{equation}
    %where $S_m$ contains $k$.
\end{itemize}

The memory sizes of the mirror and the user caches for this case are given by:

\begin{align}
    M_1 &=\frac{\alpha K_2}{K} 
    \label{memory1_t_less_k2} \\
 M_2 &= \frac{(1-\alpha)}{{K \choose {t}}}\left[ {K-1 \choose{t-1}} \right] N
     \label{memory2_t_less_k2}
\end{align}
\begin{itemize}
    \item For $t \geq K_{2}$,
\begin{align*}
 \Lambda_m = &\{ W^1_{1,k} \oplus W^1_{2,k} \oplus \cdots \oplus W^1_{N,k}  \ | \ \forall k \in S_m\}  \cup \\ &  \{ W^2_{n,S} \ | \ S \subseteq [K],\ |S|=t,\ S_m\subseteq [S] ,  \&    \ \forall n\in [N] \} 
\end{align*}
and
\begin{align*}
Z_k =  \{ W^2_{n,S} \ | \ S \subseteq [K],\ |S|=t,\ k\in S , S_m\not\subseteq [S],  \ \text{and}\ \forall n\in [N] \}
\end{align*}
% where $S_m$ contains $k$.
\end{itemize}

Therefore, for both the cases $t<K_2$ and $t \geq K_2$, we have
\begin{equation}
    M_1 =\frac{\alpha K_2}{K} + \frac{(1-\alpha)}{{K \choose t}}{{K-K_{2} \choose{t-K_2}}} N,
    \label{memory1}
\end{equation}
\begin{equation}
     M_2 = \frac{(1-\alpha)}{{K \choose {t}}}\left[ {K-1 \choose{t-1}} -{K-K_{2} \choose{t-K_2}} \right] N.
     \label{memory2}
\end{equation} 
Considering the value of ${a\choose b}$ is zero if $b$ is negative, for all $1 \leq t \leq K$, the value of ${M_1}$ ranges from $\frac{{K-K_{2} \choose{t-K_2}} N}{{K \choose t}}$ to $\frac{1}{K_1}$, while ${M_2}$ is related to ${M_1}$ via the following equation:
\begin{equation}
M_2 = \frac{\left(M_1  - \frac{1}{K_1}\right) \left( {K_1K_2-1 \choose{t-1}} -{K_1K_2-K_2 \choose{t-K_2}} \right)N}{\left({\frac{{K-K_{2} \choose{t-K_2}} N}{{K \choose {t}}}} - \frac{1}{K_1} \right){K_1K_2 \choose t}}\ \, .
\label{m2equation}
\end{equation}
Clearly, the variation of ${M_2}$ with respect to ${M_1}$ is linear. Note that $M_1$ and $M_2$ are constrained unlike in \cite{karamchandani2016hierarchical, 8481555, wang2019reduce}. However, some other constraints appear in \cite{kong2022centralized}. Further, we have
$$M_1+M_2=\frac{\alpha K_2}{K} + \frac{(1-\alpha)Nt}{K}, $$ 
and the global memory of the system is, 
\begin{align}
\overline{M}&=M_1K_1+M_2K \nonumber\\
&=\alpha +  \frac{(1-\alpha)}{{K \choose {t}}}\left[ \frac{1}{K_2}{K-K_{2} \choose{t-K_2}} + {K-1 \choose{t-1}}
  -{K-K_{2} \choose{t-K_2}} \right]NK .
\label{global memory}
\end{align}
For a fixed $t \in [K]$, the proposed scheme works for $$\overline{M} \in \left[ 1, \frac{1}{{K \choose {t}}}\left( \frac{1}{K_2}{K-K_{2} \choose{t-K_2}} + {K-1 \choose{t-1}} -{K-K_{2} \choose{t-K_2}} \right) NK\right]$$ which corresponds to $\alpha \in [0,1].$

\subsection{Delivery phase}
Let the demand vector be $\overline{d}=(d_1,d_2, \ldots, d_K)$, and each file is demanded by at least one user. Consider a set $\mathcal{B} \subseteq [K]$ (called the base set) such that $|\mathcal{B}|=N$ and $\{d_i \ | \ i \in \mathcal{B}\} = [N]$.
The transmissions are as follows:
\\
\noindent \textbf{Server to mirrors:}
The following transmissions are made from the server to the mirrors.
\begin{enumerate}[label=(\subscript{\textbf{SM}}{{\arabic*}})]
   \item For each $i\in [K]$, transmit mini-subfiles $W^1_{n,i}$, for $n\in [N]$ and $n \neq d_i$.
   \item For each $i \in [K] \backslash \mathcal{B}$, transmit $W^1_{d_i, i} \oplus W^1_{d_{i'},i'}$, where $i' \in \mathcal{B}$ and $d_i=d_{i'}$.
   \item For each $S \subseteq [K]$ such that $|S|=t+1$, transmit $\bigoplus_{s\in S} W^2_{d_s, S\backslash \{s\}}$. 
\end{enumerate}

\noindent \textbf{Mirror to users:}
For $m \in [K_1]$, consider $D_{m}=\{ d_k \ | \ k \in S_m \}$ and $|D_{m}|=t_m$.  Clearly, $1 \leq t_m \leq K_2$. The following transmissions are made from the $m^{\text{th}}$ mirror to the users attached to it, i.e., all users in set $S_m$, where $m \in [K_1]$.
\begin{enumerate}[label=(\subscript{\textbf{MU}}{{\arabic*}})]
\item For each $n \in D_{m} $, transmit $W^1_{n,i}$ for all $i \in [K]$.
\item For each $S \subseteq [K]$ such that $|S|=t+1$ and $S\cap S_m \neq \emptyset$, transmit 
$\bigoplus_{s \in S} W^2_{d_s, S\backslash \{s\}}.$
\item For each $n \in D_{m} $, transmit $W^2_{n, S}$, such that $S \subseteq [K]$, $|S|=t$, $S_m \subseteq [S]$.
\end{enumerate}

\begin{note}
For $N>K$, the steps $(\textbf{SM}_1)$, $(\textbf{SM}_2)$ and $(\textbf{MU}_1)$ are not needed as for this case we only considering $\alpha=0$ and there will be no subfiles named $W_n^1, n \in [N]$. Also, there will be no need for the base set $\mathcal{B}$.
\end{note}

\subsection{Proof of correctness}
\subsubsection {Mirror gets from the server what it transmits to users:}

In this subsection, we prove that after receiving the transmissions from the server, each mirror will get all the files it transmits to the users in the delivery phase.

For $m \in [K_1]$, consider $m^{\text{th}}$ mirror with the set $D_m=\{d_k \ | \ k \in S_m\}$. Let the set $D^{(n)}$ contain all the users with demand $W_n$ for all $n \in [N]$, i.e., 
$$D^{(n)}=\{k \in [K] \ | \ d_k=n\}.$$
We will prove this step-wise:

\noindent \textbf{For step $(\textbf{MU}_1)$:} Let $\lambda \in S_m$, then after receiving the transmissions from the server, mirror $m$ should have all the mini-subfiles of $W^1_{d_{\lambda}}$, i.e., $W^1_{d_{\lambda},1}, W^1_{d_{\lambda},2}, \ldots, W^1_{d_{\lambda},K}$. From step $(\textbf{SM}_1)$, the mirror receives $W^1_{n,\lambda}$ for all $n \in [N]$ and $n \neq d_{\lambda}$. Hence, using mini-subfile $W^1_{1,\lambda} \oplus W^1_{2,\lambda} \oplus \cdots \oplus W^1_{N,\lambda}$ from the cache of mirror $m$, we get $W^1_{d_{\lambda},\lambda}$. Also, from step $(\textbf{SM}_1)$, we get $W^1_{d_{\lambda}, j}$ for all $j \in [K]$ such that $d_j \neq d_{\lambda}$. Now, the following mini-subfiles are left to be retrieved.
$$W^1_{d_{\lambda}, j}, \ \forall j \in D^{(d_{\lambda})} \backslash \{\lambda\}.$$
There are following two cases:
\begin{enumerate}
\item If $\lambda \in \mathcal{B}$, then $D^{(d_{\lambda})} \backslash \{\lambda\} \subseteq  [K]\backslash \mathcal{B}$. Hence from step $(\textbf{SM}_2)$, the mirror receives 
$$W^1_{d_{j}, j} \oplus W^1_{d_{j}, \lambda} = W^1_{d_{\lambda}, j} \oplus W^1_{d_{\lambda}, \lambda},$$
 for all $j \in D^{(d_{\lambda})} \backslash \{\lambda\}$. Since the mirror already has $W^1_{d_{\lambda}, \lambda}$, it will get $W^1_{d_{\lambda}, j}$. 
\item  If $\lambda \not\in \mathcal{B}$, then there exist $\lambda' \in D^{(d_{\lambda})}$ such that $\lambda' \in \mathcal{B}$. Since $\lambda \in [K] \backslash \mathcal{B}$, from step $(\textbf{SM}_2)$, the mirror receives 
$W^1_{d_{\lambda}, \lambda} \oplus W^1_{d_{\lambda}, \lambda'}$. The mirror will get $W^1_{d_{\lambda}, \lambda'}$ as it already has $W^1_{d_{\lambda}, \lambda}$. Again from step $(\textbf{SM}_2)$, the mirror receives 
$$W^1_{d_{j}, j} \oplus W^1_{d_{j}, \lambda'} = W^1_{d_{\lambda}, j} \oplus W^1_{d_{\lambda}, \lambda'},$$
 $\forall j \in D^{(d_{\lambda})} \backslash \{\lambda, \lambda'\}$. Since the mirror already has $W^1_{d_{\lambda}, \lambda'}$, it will get $W^1_{d_{\lambda}, j}$. 
\end{enumerate}

\noindent \textbf{For step $(\textbf{MU}_2)$:} In step $(\textbf{SM}_3)$, the server transmits $\bigoplus_{s\in S} W^2_{d_s, S\backslash \{s\}}$ for all $S \subseteq [K]$ such that $|S|=t+1$. Therefore, mirror $m$ will get $\bigoplus_{s\in S} W^2_{d_s, S\backslash \{s\}}$  for all
$S \subseteq [K]$ such that $|S|=t+1$ and $S \cap S_m \neq \emptyset$, transmitted in step $(\textbf{MU}_2)$. 

\noindent \textbf{For step $(\textbf{MU}_3)$:} Since cache of mirror $m$ contains, $W^2_{n,S_m}$ for all $n \in [N]$, it has all the mini-subfiles transmitted to users in step  $(\textbf{MU}_3)$.

\subsubsection{All users get their desired files:}

In this subsection, we prove that after receiving the transmissions from the mirrors, each user will get the demanded file in the delivery phase.

Consider a user $\lambda \in S_m$ for some $m \in [K_1]$. The demand of user $\lambda$ is $W_{d_{\lambda}}$. In step $(\textbf{MU}_1)$, all the mini-subfiles of $W^1_{d_{\lambda}}$ are transmitted from mirror $m$ to all users in $S_m$. Therefore, we need to show that from step $(\textbf{MU}_2)$, step $(\textbf{MU}_3)$ and the cache content of user $\lambda$, $Z_{\lambda}$, user $\lambda$ will get the sub-file $W^2_{d_{\lambda}}$. Since from step $(\textbf{MU}_3)$, user $\lambda$ will get $W^2_{d_{\lambda}, S_m}$, the user has the following mini-subfiles of $W^2_{d_{\lambda}}$,
$$W^2_{d_{\lambda}, S}, \ \text{where} \ S\subseteq [K], |S|=t, \ \text{and} \ \lambda \in S.$$
Now take a mini-subfile $W^2_{d_{\lambda}, S'}$, where $ S' \subseteq [K], |S'|=K_2, \ \text{and} \ \lambda \not\in S'$. Since $\lambda \in S_m$, the set $S' \cup \{\lambda\}$ will satisfy the condition of step $(\textbf{MU}_2)$, and the user will receive the transmission 
$$\bigoplus_{s\in S' \cup\{ \lambda \}} W^2_{d_s, S' \cup \{ \lambda \}\backslash \{s\}}.$$
From this transmission, user $\lambda$ will get $W^2_{d_{\lambda}, S'}$ as it already has all the other mini-subfiles $W^2_{d_{s}, S' \cup \{ \lambda \}\backslash \{s\}}$ for $s \neq \lambda$. 
\subsection{Rate}
\noindent \textbf{Server to mirrors:}
As per the delivery phase, the following rate is calculated step-wise.
\begin{enumerate}
\item The total number of mini-subfiles of size $F_1$ transmitted in step $(\textbf{SM}_1)$ and $(\textbf{SM}_2)$ together is $K(N-1) + K-N = N(K-1)$.
\item The total number of mini-subfiles of size $F_2$ transmitted in step $(\textbf{SM}_3)$ is ${K \choose{t+1}}$.
\end{enumerate}
Therefore, we have the rate:
\begin{align}
R_1 &=  \frac{\alpha}{K} [N(K-1)]+  \frac{(1-\alpha)}{{K \choose{t}}}\left[ {K \choose{t+1}} \right] \nonumber \\
&= \alpha \frac{N(K-1)}{K}+ (1-\alpha)  \frac{K-t}{t+1}.
\label{rate1}
\end{align}

\noindent \textbf{Mirror to users:}
As per the delivery phase, the following rate is calculated step-wise for $m^{\text{th}}$ mirror to the users connected to it, where $m \in [K_1]$.
\begin{enumerate}
\item The total number of mini-subfiles of size $F_1$ transmitted in step $(\textbf{MU}_1)$ is $Kt_m$, where $t_m= | D_m|$.
\item The total number of mini-subfiles of size $F_2$ transmitted in step  $(\textbf{MU}_2)$ is ${K \choose{t+1}} - {K-K_2 \choose{t+1}}$.
\item The total number of mini-subfiles of size $F_2$  transmitted in step $(\textbf{MU}_3)$ is ${K-{K_2}} \choose {t-K_{2}}$.
\end{enumerate}
 Therefore, the total number of mini-subfiles of size $F_1$ is $Kt_m$, and the total number of mini-subfiles of size $F_2$ is 
${K \choose{t+1}}$-${K-{K_2} \choose{t+1}}$ + ${K-{K_2}} \choose {t-K_{2}}$. Hence, the rate from $m^{\text{th}}$ mirror to the users connected to it is given as follows: 
\begin{align}
R^{(m)}_2 &=\frac{\alpha Kt_m}{K}+ \frac{(1-\alpha)\left( {K \choose{t+1}} - {K-K_2 \choose{t+1}} + {K-K_2 \choose t-K_2} t_m \right)}{{K \choose{t}}}  \nonumber \\
&= \alpha t_m+ (1-\alpha) \left[ \frac{K-t}{t+1} - \frac{{K-K_2 \choose{t+1}} -{K-K_2 \choose t-K_2} t_m}{{K \choose{t}}} \right]
\label{rate2}\ \, ,
\end{align}
where $1 \leq t_m \leq K_2$.
If $N=K$, then for all $m \in [K_1]$, we have $t_m=K_2$  and
$$ R^{(m)}_2=\alpha K_2+ (1-\alpha) \left[ \frac{K-t}{t+1} - \frac{{K-K_2 \choose{t+1}} -{K-K_2 \choose t-K_2} K_2}{{K \choose{t}}} \right].$$ 
Let \( R_2 \) denote the worst-case rate from a mirror to users when each user attached to one mirror demands at least a distinct file., i.e., for at least one $m\in [K_1]$, we have $t_m=K_2$. Therefore,
$$ R_2=\alpha K_2+ (1-\alpha) \left[ \frac{K-t}{t+1} - \frac{{K-K_2 \choose{t+1}} -{K-K_2 \choose t-K_2} K_2}{{K \choose{t}}} \right].$$
Clearly,
$R_2^{(m)} \leq R_2$ for all $m \in [K_1]$.
Now the composite rate, $\overline{R}$ is defined as
$
\overline{R} = R_{1}+K_{1}R_{2}.
$
which represents the overall bandwidth of the system.
In Figure \ref{fig:MR}, the plot between the global memory and composite rate has been shown for the proposed scheme with the following observations:
\begin{itemize}
\item Corresponding to each $t \in [K]$, there is a line which is obtained by moving $\alpha$ from $0$ to $1$. Since the proposed scheme is a combination of the CFL scheme and the MN scheme, the end point of a line where $\alpha=1$ corresponds to only the CFL scheme and other end point where $\alpha=0$ corresponds to only the MN scheme.
\item For all $t < K_2$, mirrors' caches only contain the coded placements, and at the end point $(\alpha=0)$, $M_1=0$. Whereas for $K_2 \leq t \leq K$, mirrors' caches contain coded and uncoded placements, which can be seen from the given placement phase.
\item For a fixed global memory $\overline{M}=m$, we get multiple global memory-composite rate points $(\overline{M}, \overline{R})$. Each point corresponds to a different pair of $(t, \alpha)$, which means a different pair of cache memories $(M_1, M_2)$. If the distribution of the global memory is not predefined in terms of $M_1$ and $M_2$, choosing the lowest point in the plot will get the minimum composite rate.
\item All the end points $(\alpha=0)$ of the lines for $K_2 < t <K$ correspond to the first scheme given in \cite{kong2022centralized} which is also described in Subsection \ref{alpha0}.
\item Since the CFL scheme gives a better rate than the MN scheme only when $N\leq K$, so for the case $N>K$, we consider only the memory points for $\alpha=0$ which corresponds to MN scheme only, for all $t \in [K]$ which are shown in Figure \ref{fig:MR2}.
\end{itemize}
In environments where mirrors are positioned in close proximity, preventing signal interference becomes crucial. This is achieved by ensuring that all mirrors transmit using orthogonal carrier frequencies, thereby underlining the utility of the composite rate, which represents the total bandwidth utilized. 
When all mirrors are situated at considerable distances from each other, they can utilize the same carrier frequency. For these cases, the metric $R_1$ + $R_2$ becomes a more relevant measure for analyzing total bandwidth requirements. Additionally, we also plot the tradeoff between \( R_1 \) and \( R_2 \) separately with \( M_1 \) and \( M_2 \), i.e., \( R_1 \) with \( M_1 \) and \( M_2 \), and then \( R_2 \) with \( M_1 \) and \( M_2 \). This is because in scenarios where concurrent transmission between the server to mirror and mirror to users is allowed, the metric \( \max(R_1, R_2) \) becomes a more relevant measure to analyze the bandwidth requirement. 
We plot $R_1$ + $R_2$ against global memory for our proposed schemes and existing schemes, providing a comprehensive comparison. 
Additionally, we provide a detailed visualization by plotting $R_1$ + $R_2$ separately against $M_1$ and $M_2$. This offers a nuanced perspective on performance analysis with different pairs of $M_1$ and $M_2$, which may result in the same global memory.

% \textcolor{blue}{Additionally, we plot $R_1$ + $R_2$ in relation to $M_1$ and $M_2$ for our scheme and others, offering insights into performance relative to memory utilization across different methodologies.  } \\

Furthermore, we illustrate the trade-off between $R_1$ and $R_2$ separately with $M_1$ and $M_2$ within our proposed scheme, comparing it with existing methodologies. Section \ref{compare} presents a detailed comparison with each scheme.
\subsection{Memory Sharing} \label{memory sharing}

\begin{thm} If three memory rate points \( \mathbf{A}(M_{1}^{A}, M_{2}^{A}, R_{1}^{A}), \mathbf{B}(M_{1}^{B}, M_{2}^{B}, R_{1}^{B}), \) and \( \mathbf{C}(M_{1}^{C}, M_{2}^{C}, R_{1}^{C}) \) are achievable by a scheme, then any point \( \mathbf{P}(M_{1}, M_{2}, R_{1}) \) in the convex hull of \( \mathbf{A}, \mathbf{B}, \) and \( \mathbf{C} \) (which forms a triangle with vectors \( \mathbf{A}, \mathbf{B}, \) and \( \mathbf{C} \)) is achievable.
\end{thm}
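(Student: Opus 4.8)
The plan is to use a three-way \emph{memory-sharing} (file-splitting) argument, which generalizes the usual two-point memory sharing along a line segment to the triangle spanned by $\mathbf{A}, \mathbf{B}, \mathbf{C}$ by means of barycentric coordinates. First, since $\mathbf{P}$ lies in the convex hull of $\mathbf{A}, \mathbf{B}, \mathbf{C}$, I would write it as a convex combination: there exist weights $\lambda_A, \lambda_B, \lambda_C \geq 0$ with $\lambda_A + \lambda_B + \lambda_C = 1$ such that $\mathbf{P} = \lambda_A \mathbf{A} + \lambda_B \mathbf{B} + \lambda_C \mathbf{C}$. These barycentric coordinates exist precisely because $\mathbf{P}$ is in the triangle, and coordinate-wise this reads $M_1 = \lambda_A M_1^A + \lambda_B M_1^B + \lambda_C M_1^C$, with the analogous identities for $M_2$ and $R_1$.

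Next, I would split each file $W_n$ into three disjoint parts $W_n^{(A)}, W_n^{(B)}, W_n^{(C)}$ of sizes $\lambda_A F, \lambda_B F, \lambda_C F$ bits, respectively. On the sub-library $\{W_n^{(A)}\}_{n \in [N]}$ I run the scheme achieving point $\mathbf{A}$, on $\{W_n^{(B)}\}_{n \in [N]}$ the scheme achieving $\mathbf{B}$, and on $\{W_n^{(C)}\}_{n \in [N]}$ the scheme achieving $\mathbf{C}$, each using its own disjoint region of the mirror and user caches and its own delivery-phase transmissions. Because the three sub-schemes act on disjoint file fragments, disjoint cache regions, and produce transmissions that are simply concatenated, the additivity of the resource costs follows: the normalized mirror memory used is $\lambda_A M_1^A + \lambda_B M_1^B + \lambda_C M_1^C = M_1$ (the weight $\lambda_i$ enters because sub-scheme $i$ stores its normalized fraction of a fragment of relative size $\lambda_i$), the user memory is likewise $M_2$, and the first-layer load is the sum $\lambda_A R_1^A + \lambda_B R_1^B + \lambda_C R_1^C = R_1$.

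Correctness is then inherited term by term. For any demand vector, each user requests a single file, hence the same file index in each of the three fragments; each sub-scheme independently delivers its fragment of every demanded file, so concatenating the three reconstructions recovers the full demanded file at every user.

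The routine parts here are the bookkeeping of memory and rate. The one point that needs care, and which I expect to be the main obstacle, is verifying that the three sub-schemes run in parallel without interference: namely that (i) each user's demand is consistent across the three fragments, (ii) the cache budgets decompose additively so that stacking the three caches never exceeds $(M_1, M_2)$, and (iii) the union of three independently correct delivery phases is itself correct and incurs exactly the summed rate. I expect all three to follow immediately once the disjointness of the fragments and cache regions is made explicit, so that the theorem reduces to the existence of the barycentric weights together with the linearity of memory and rate in the file-splitting construction.
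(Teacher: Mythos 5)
Your proposal is correct and takes essentially the same approach as the paper's own proof: express $\mathbf{P}$ via barycentric weights, split each file into three fragments of sizes proportional to those weights, run the three achievable schemes on the disjoint fragments with disjoint cache regions, and conclude by linearity of memory and rate. If anything, your explicit verification of non-interference (consistent demands across fragments, additive cache budgets, concatenated deliveries) is spelled out more carefully than in the paper, which simply asserts the additivity after the file split.
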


\textbf{Proof:}
Let \( (M_{1}, M_{2}) \) lie in the convex hull of \( \mathbf{A}(M_{1}^{A}, M_{2}^{A}, R_{1}^{A}) \), \( \mathbf{B}(M_{1}^{B}, M_{2}^{B}, R_{1}^{B}) \), and \( \mathbf{C}(M_{1}^{C}, M_{2}^{C}, R_{1}^{C}) \) as shown in Fig. \ref{mems}. 
From our scheme, we consider line \( \mathbf{AB} \) and line \( \mathbf{AC} \) as two different lines for consecutive \( t \) values, say \( t_1 \) and \( t_1 + 1 \). Point \( \mathbf{A} \) is where \( \alpha = 1 \), and all lines for different values of \( t \) from our scheme originate from this point. Points \( \mathbf{B} \) and \( \mathbf{C} \) correspond to \( \alpha = 0 \) each for \( t \) values \( t_1 \) and \( t_1 + 1 \), respectively. 
At point \( \mathbf{A} \), i.e., at \( \alpha = 1 \), we have \( (M_{1}, M_{2}, R_{1}) \) as \( \left( \frac{K_{2}}{K}, 0, \frac{N(K-1)}{K} \right) \). Then we have \( 0 \leq \xi \leq 1 \) and \( 0 \leq \eta \leq 1 \) such that
\[
(M_{1}, M_{2}) = \xi (M_{1}^{A}, M_{2}^{A}) + \eta (M_{1}^{B}, M_{2}^{B}) + (1 - \xi - \eta) (M_{1}^{C}, M_{2}^{C})
\]

\begin{figure}[H]
    \begin{center}
        \captionsetup{justification=centering}
        \resizebox{\textwidth}{!}{\includegraphics{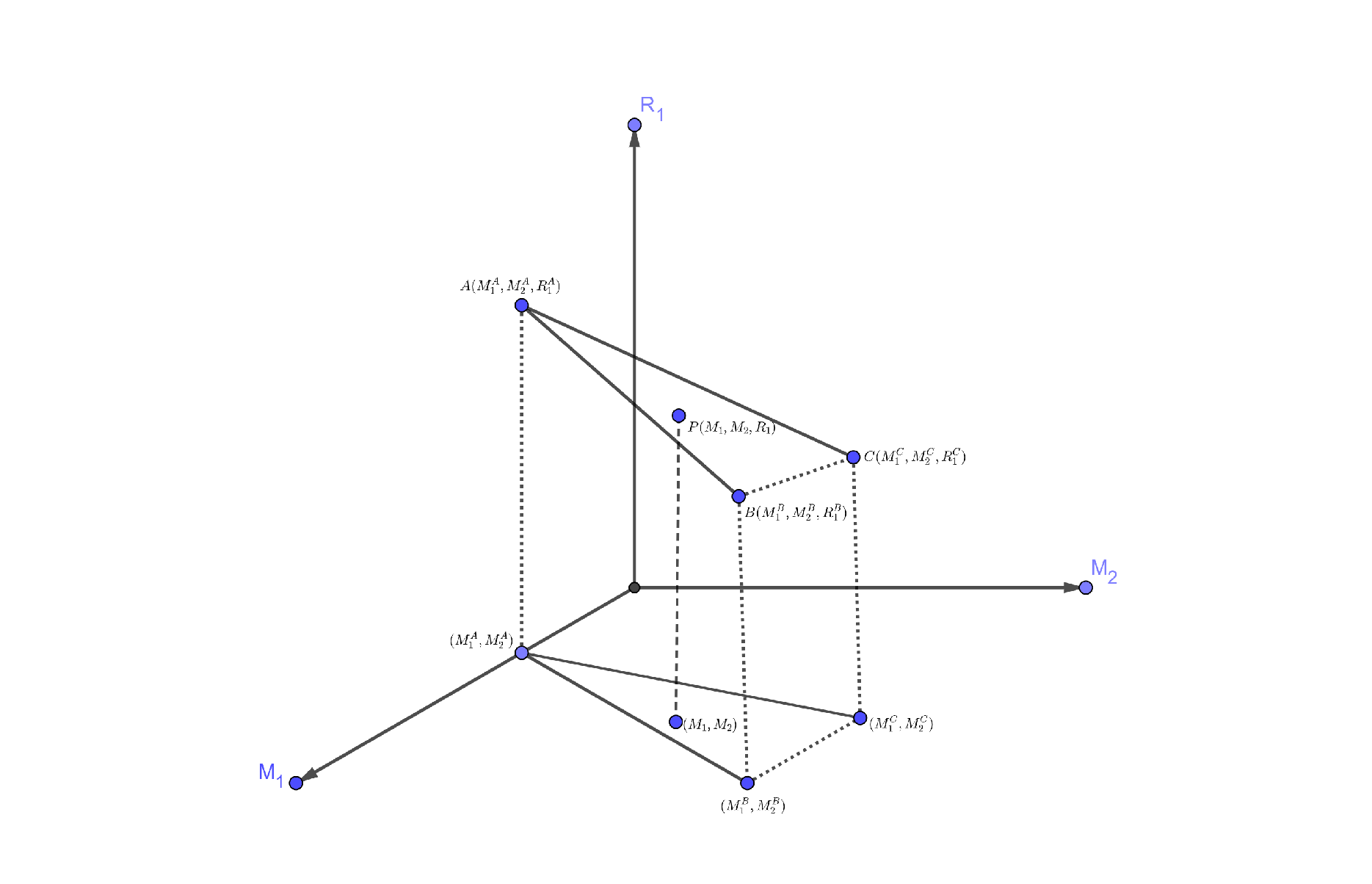}}
        \caption{Convex hull for memory sharing.}
        \label{mems}
    \end{center}
\end{figure}

Divide each file \( W_{n} \), \( n \in [N] \) of size \( F \) bits into three parts as follows:
\[
F = \xi F + \eta F + (1 - \xi - \eta)F = F_{A} + F_{B} + F_{C}
\]
where 
\[
F_{A} = \xi F, \quad F_{B} = \eta F, \quad F_{C} = (1 - \xi - \eta)F
\]
Each file \( W_{n}, n \in [N] \) is divided into three parts, \( W_{n}^{1}, W_{n}^{2}, \) and \( W_{n}^{3} \) of sizes \( F_{1}, F_{2}, \) and \( F_{3} \) respectively.

Now, by using the scheme given for \( \mathbf{A}(M_{1}^{A}, M_{2}^{A}, R_{1}^{A}), \mathbf{B}(M_{1}^{B}, M_{2}^{B}, R_{1}^{B}), \) and \( \mathbf{C}(M_{1}^{C}, M_{2}^{C}, R_{1}^{C}) \) for the subfiles \( W_{n}^{A}, W_{n}^{B}, \) and \( W_{n}^{C} \) respectively, we get the rate at \( (M_{1}, M_{2}) \) as
\[
\begin{aligned}
    RF &= R_{1}^{A} F_{A} + R_{1}^{B} F_{B} + R_{1}^{C} F_{C} \\
       &= R_{1}^{A} (\xi F) + R_{1}^{B} (\eta F) + R_{1}^{C} (1 - \xi - \eta)F \\
       &= \xi R_{1}^{A} + \eta R_{1}^{B} + R_{1}^{C} (1 - \xi - \eta)
\end{aligned}
\]
\begin{note}
Similarly, the same convex hull methodology can be applied to determine the rate of the second layer, \( R_2 \), as was employed for \( R_1 \). Both \( R_1 \) and \( R_2 \) are linear functions of \( \alpha \), \( t \), and the system parameters \( K_1, K_2, N \).
\end{note}
\subsection{Coding Delay} \label{coding delay}

In \cite{wang2019reduce}, the authors designated $T_{d}$ as the duration of the delivery phase. They defined coding delay as the normalization of $T_{d}$ by the file size, offering a standardized metric for delay assessment as:
\begin{equation}
    T  \overset{\Delta}{=} sup \frac{T_{d}}{F}
\end{equation}
The authors posited that if in a ($\emph{$K_{1}$},\emph{$K_{2}$};\emph {$M_{1}$},\emph{$M_{2}$};\emph{N}$)  hierarchical coded caching problem, both the server and all mirrors transmit symbols concurrently across all transmission slots, then the resultant coding delay would be:
\begin{equation}
T = \max\{R_1, R_2\} \label{cd}
\end{equation}
On the contrary, if there is a relay that begins transmission only after the server completes its transmission, then:
\begin{equation}
T = R_1 + R_2
\end{equation}
In our scheme, with concurrent transmission both from the server to mirrors and from mirrors to the server, we facilitate the transfer of cached content from mirrors relevant to the demands of the users attached to them. Consequently, this capability reduces the rate of the second layer, \( R_2 \). We have discussed this possibility in Ex.\ref{ex1} by illustrating the variation of \( R_2 \)  when concurrent transmission is permitted compared to when it's not. 
To our understanding, concurrent transmission has not been addressed in any existing schemes. Hence, we solely compared the coding delay of our scheme with that of the scheme proposed in \cite{wang2019reduce}.

\subsection{For $\alpha=0$:} \label{alpha0}

 The proposed scheme contains the KWC scheme given in \cite{kong2022centralized}, which is based on the MN scheme. For $\alpha =0$, the parameters of the proposed scheme are given as follows:

\begin{align*}
\frac{M_1}{N}& = \frac{1}{{K \choose t}}{{K-K_{2} \choose{t-K_2}}}, \\
  \frac{M_2}{N} &= \frac{1}{{K \choose {t}}}\left[ {K-1 \choose{t-1}} -{K-K_{2} \choose{t-K_2}} \right], \\
R_1&= \frac{K-t}{t+1},\\
R_2& = \frac{K-t}{t+1} - \frac{{K-K_2 \choose{t+1}} -{K-K_2 \choose t-K_2} K_2}{{K \choose{t}}}.
\end{align*}
These parameters are the same as the parameters of the first scheme given in \cite{kong2022centralized} for $K_2 < t < K$.

\begin{figure*}[!t]
	\centering
		% \captionsetup{justification=centering}
		\includegraphics[width= 0.8\textwidth]{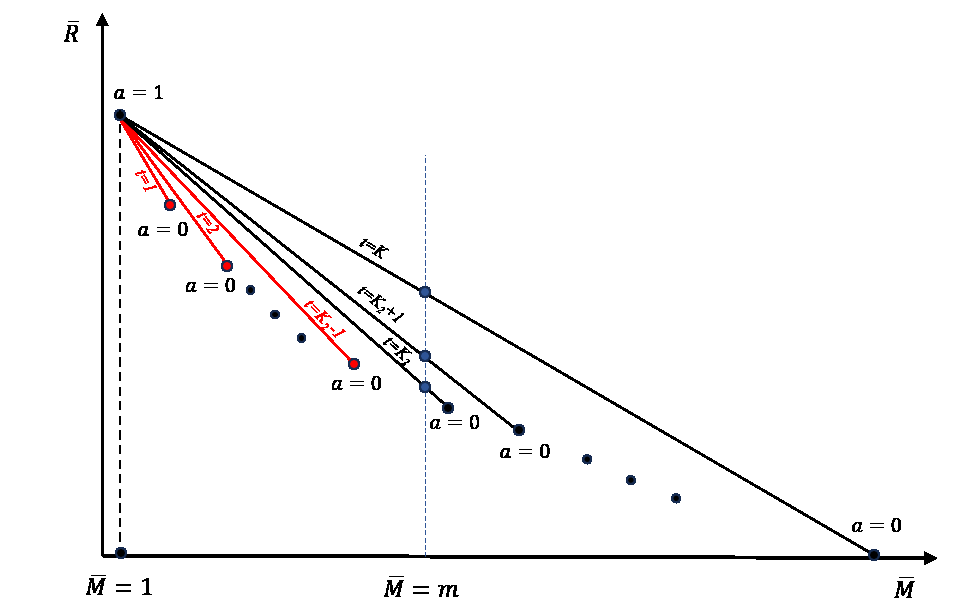}
		\caption{Global memory and composite rate for $N \leq K$}
		\label{fig:MR}	
\end{figure*} 

\begin{figure*}[!t]
	\centering
		% \captionsetup{justification=centering}
		\includegraphics[width= 0.8\textwidth]{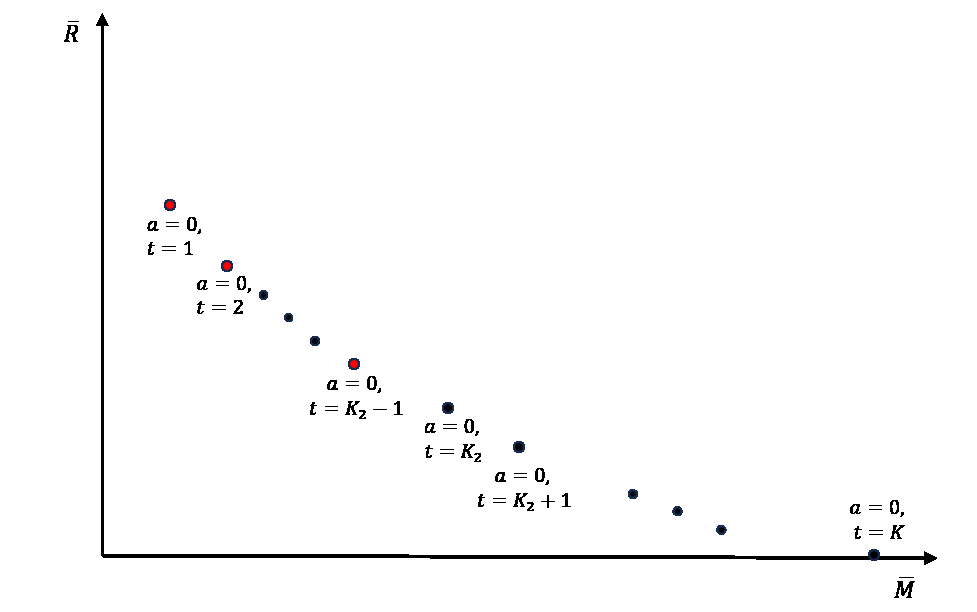}
		\caption{Global memory and composite rate for $N > K$}
		\label{fig:MR2}	
\end{figure*} 

\begin{figure*}[!t]
	\centering
		% \captionsetup{justification=centering}
		\includegraphics[width= 0.8\textwidth]{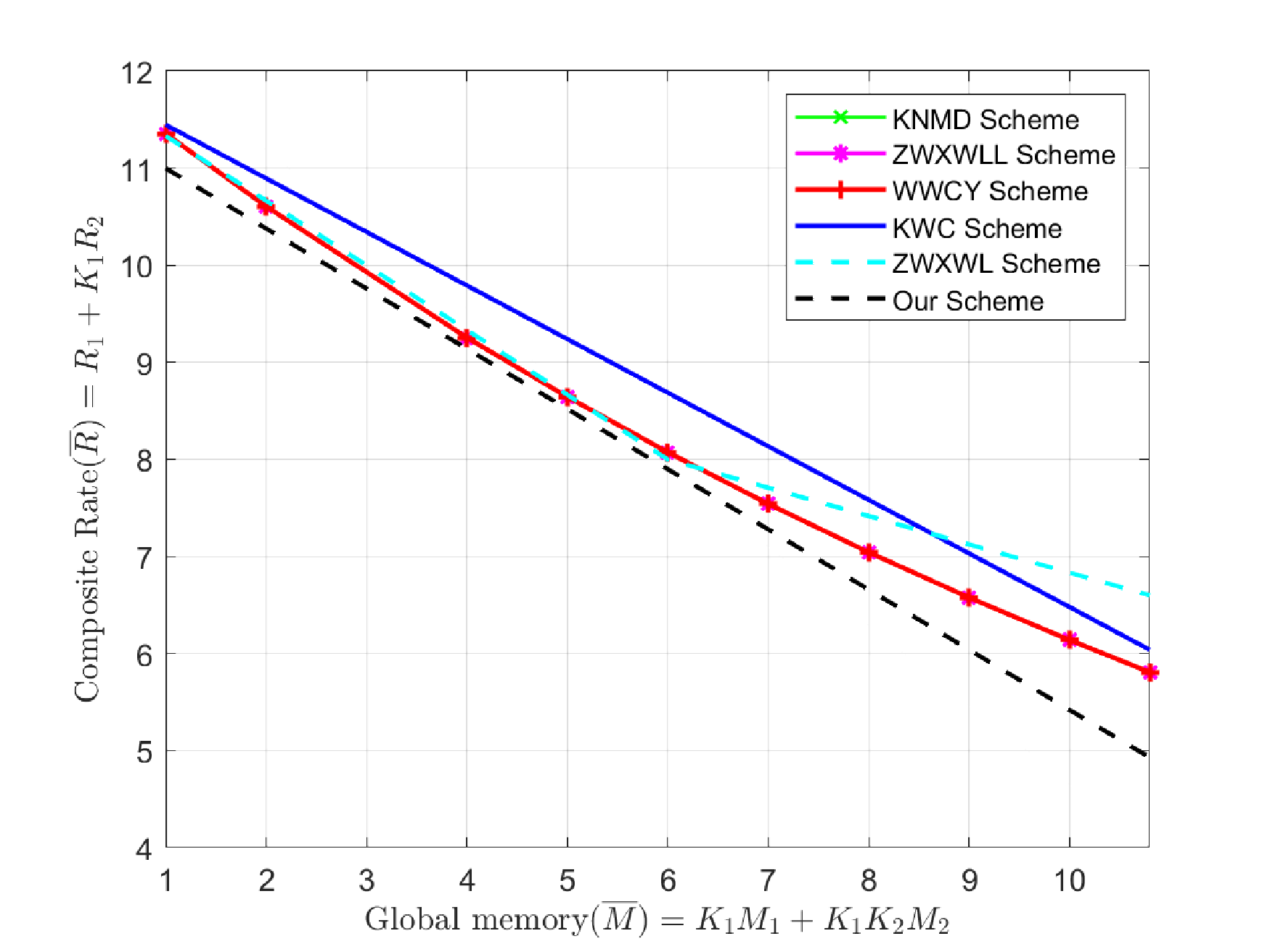}
		\caption{Performance comparison for a ($3,2$;\emph {$M_{1}$},\emph{$M_{2}$};6) hierarchical caching system in Example \ref{ex1}}
		\label{fig:CRvsGM}	
\end{figure*} 

The following example illustrates the placement and delivery phase of the proposed scheme for the case when $N=K$, whereas Example \ref{ex2} explains it for the case $N<K$.
\begin{example} \label{ex1}
Let $K_1=3, K_2=2, N=6,$ $\alpha = \frac{1}{2}$. For $t=2$, we have,
$$M_1=\frac{\alpha K_2}{K}  + \frac{(1-\alpha)N}{{K \choose {t}}}=\frac{11}{30}, $$
$$ \\M_2= \frac{(1-\alpha)}{{K \choose {t}}}\left( {K-1 \choose{t-1}} -1 \right) N = \frac{4}{5},$$
$M_1+M_2=\frac{7}{6}$ and the global memory is $\overline{M}=5.9$.
Divide each file $W_n$ into two sub-files $W^1_n$ and $W^2_n$, for all $n \in [6]$. Further, divide $W^1_n$ into $6$ mini-subfiles, say, $W^1_{n,1}, W^1_{n,2}, \ldots, W^1_{n,6}$, and divide $W^2_n$ into $15$ mini-subfiles, say, $W^2_{n,S},  S \in [6], |S|=2$. Consider the sets \\
$$S_1=\{ 1,2 \}, \quad S_2=\{3,4\}, \quad S_3=\{5,6\}.$$

\begin{figure*}[!t]
	\centering
		% \captionsetup{justification=centering}
		\includegraphics[width= 0.8\textwidth]{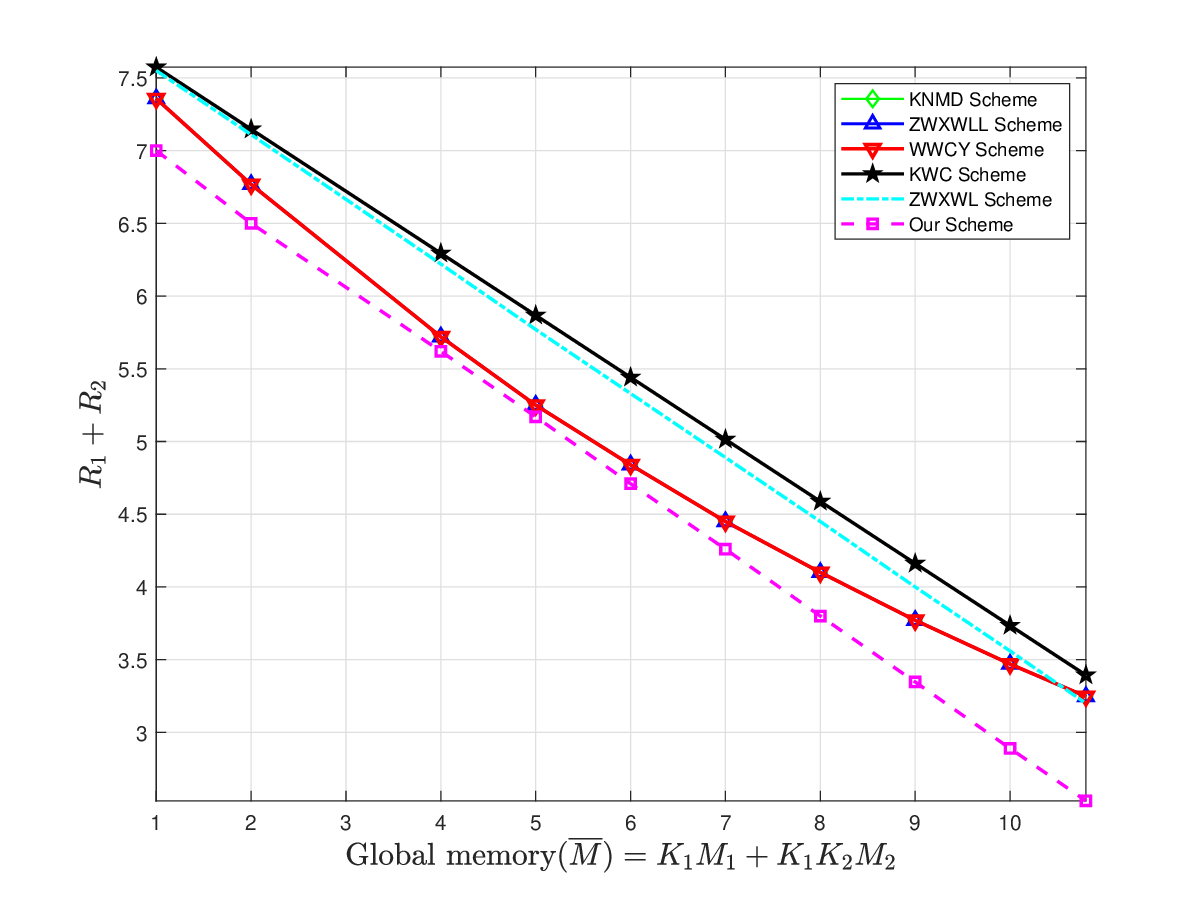} \vspace{-0.5cm}
		\caption{$R_1$+$R_2$ trade-off for ($3,2$;\emph {$M_{1}$},\emph{$M_{2}$};6) hierarchical caching system in Example \ref{ex1}}.
		\label{R_1+R_2}	
\end{figure*}

\noindent \textbf{Placement phase:} The cache contents of the mirrors are as follows,
\begin{align*}
\Lambda_1 &=\{ W^1_{1,k}  \oplus \cdots \oplus W^1_{6,k} , \forall k \in S_1\} \cup  \{W^2_{n,\{1,2\}} ,\forall n \in [6]\}\\
\Lambda_2 &= \{ W^1_{1,k}  \oplus \cdots \oplus W^1_{6,k}, \forall k \in S_2\} \cup  \{W^2_{n,\{3,4\}}, \forall n \in [6]\}\\
\Lambda_3 &=\{ W^1_{1,k} \oplus \cdots \oplus W^1_{6,k} ,\forall k \in S_3\} \cup  \{W^2_{n,\{5,6\}} , \forall n \in [6]\}.
\end{align*}
The cache contents of the users are as follows,
\begin{align*}
Z_1 &=\{ W^2_{n,\{1,3\}}, W^2_{n,\{1,4\}}, W^2_{n,\{1,5\}}, W^2_{n,\{1,6\}}, \forall n\in [6] \}\\
Z_2 &= \{ W^2_{n,\{2,3\}}, W^2_{n,\{2,4\}}, W^2_{n,\{2,5\}}, W^2_{n,\{2,6\}} ,\forall n\in [6] \}\\
Z_3 &=\{ W^2_{n,\{1,3\}}, W^2_{n,\{2,3\}}, W^2_{n,\{3,5\}}, W^2_{n,\{3,6\}} ,\forall n\in [6] \}\\
Z_4 &= \{ W^2_{n,\{1,4\}}, W^2_{n,\{2,4\}}, W^2_{n,\{4,5\}}, W^2_{n,\{4,6\}}, \forall n\in [6] \}\\
Z_5 &=\{ W^2_{n,\{1,5\}}, W^2_{n,\{2,5\}}, W^2_{n,\{3,5\}}, W^2_{n,\{4,5\}} , \forall n\in [6] \}\\
Z_6 &= \{ W^2_{n,\{1,6\}}, W^2_{n,\{2,6\}}, W^2_{n,\{3,6\}}, W^2_{n,\{4,6\}}, \forall n\in [6] \}.\\
\end{align*} 

\noindent \textbf{Delivery phase:} Let the demand vector be $(1,2,3,4,5,6)$,  i.e.,\ all users demand distinct files, and $\mathcal{B}=\{1,2,3,4,5,6\}$. The delivery takes place in two phases. The server transmits to all the mirrors, followed by the transmission from the mirrors to each of the attached users respectively. 

\noindent  \textbf{Server to mirrors:}
\begin{itemize}
\item The server transmits the following uncoded messages
$W^1_{i,j}$, 
where, $1 \leq i,j \leq 6$, and $i \neq j$.
\item The server also transmits the coded messages:
$$\bigoplus_{s\in S} W^2_{d_s, S\backslash \{s\}},$$ for all $S \subseteq [6]$ such that $|S|=3$.
\end{itemize}
Therefore, the rate is 
$$R_1=\frac{30}{2 \times 6}+ \frac{20}{2\times 15}= \frac{19}{6}= 3.167.$$

\noindent \textbf{Mirrors to users:} The $1^{\text{st}}$ mirror will transmit the following uncoded and coded messages:		
\begin{itemize}

\item Uncoded messages:\\
% \[
% \begin{matrix}
$W^1_{1,1},  W^1_{1,2}, W^1_{1,3}, W^1_{1,4}, W^1_{1,5}, W^1_{1,6}, 
W^1_{2,1}, W^1_{2,2}, W^1_{2,3},  W^1_{2,4},  W^1_{2,5},  W^1_{2,6}$.
% \end{matrix}
% \]
\\
\item Coded messages: 
$$\bigoplus_{s\in S} W^2_{d_s, S\backslash \{s\}},$$ 
for all $S \in \mathcal{S} \backslash \{ \{3,4,5\}, \{3,4,6\}, \{3,5,6\}, \{4,5,6\}\}$, where $\mathcal{S}=\{ T \subseteq [6] \ | \ |T|=3\}$.
\\
\item The mirror also transmits $W^2_{1,\{1,2\}}$ and $W^2_{2,\{1,2\}}$, which were placed in the mirror during placement phase.
\end{itemize}
Therefore, the rate is 
$$R_2=\frac{12}{2 \times 6}+ \frac{18}{2\times 15}= \frac{8}{5}=1.6.$$

Similarly, the rate from the  $2^{\text{nd}}$ mirror to the users in the set $S_2$ and the rate from the $3^{\text{rd}}$ mirror to the users in the set $S_3$ are equal to $R_2=1.6$. Therefore, the proposed scheme achieves the composite rate $\overline{R}=7.967$ for the global memory is $\overline{M}=5.9$. For the same value of $N, K$, and $\overline{M}$, the KWC scheme gives the composite rate $\overline{R}=8.743$.
\end{example}

As discussed in Section \ref{coding delay}, If we incorporate transmission from the mirrors to the users simultaneously with the server transmitting to the mirrors, we can send $W^2_{1,\{1,2\}}$ and $W^2_{2,\{1,2\}}$, stored in the mirror during the placement phase, which reduces $R_2$ to $1.533$. For the same memory points ($M_1$,$M_2$), WWCY scheme \cite{wang2019reduce} achieves $R_1=3.264$ and $R_2=1.62$ through permitted concurrent transmission. 
In this case, the coding delay, as defined in Equation \ref{cd}, is $T=3.264$ for the WWCY scheme and $T=3.167$ for our scheme.

For $N=6, K_1=3, K_2=2$ and $t=2$, Figure \ref{fig:CRvsGM} shows the comparison of the proposed scheme with the other existing schemes using the parameters global memory and composite rate, which is also discussed in the next section in detail. In Figure \ref{fig:CRvsGM}, the line corresponding to the proposed scheme is obtained by varying $\alpha$ from $0$ to $1$. Additionally, we analyze the trade-off for $R_1$+$R_2$ with global memory in Fig. \ref{R_1+R_2}. Our scheme yields the minimum $R_1$+$R_2$, which is a more relevant metric to analyze scenarios where all the mirrors utilize identical bandwidth. Furthermore, for the same parameters $N, K_1, K_2$ and $\overline{M}=7.2$ (corresponding to $t=2$ and $\alpha=\frac{18}{49}$), the rates of the proposed scheme are compared with the existing schemes in Table \ref{tab1}. We kept the value of global memory the same for all the existing schemes and computed the value of $R_1$, $R_2$, and the composite rate $\overline{R}$. It is clear from the table that the proposed scheme achieves the lowest composite rate ($\overline{R}=7.162$) for the global memory $\overline{M}=7.2$. The value of $R_2$ is minimum for the proposed scheme, and the value of $R_1$ is also minimum for the proposed scheme except for the ZWXWL scheme. The computation of the rates of the existing schemes is discussed in details in the next section.

\begin{table*}[ht]
	\centering
	\caption{Comparison of schemes with $K_1=3$, $K_2=2$, $N=6$, and $\overline{M}=7.2 \ (t=2, \alpha=\frac{18}{49})$  for Example \ref{ex1}} 
	\begin{tabular}{| c| c | c |c|c|c| }
		\hline
		\rule{0pt}{2.5ex}
		\makecell{} & \makecell{KNMD Scheme\\\cite{karamchandani2016hierarchical}}  & \makecell{ZWXWL Scheme\\\cite{7763243}} & \makecell{ZWXWLL Scheme\\\cite{8481555}} &\makecell{WWCY Scheme\\\cite{wang2019reduce}}  & \makecell{Proposed scheme \\ (For same $\overline{M}$)}  \\ [2pt]
		\hline
  
		\rule{0pt}{2.5ex}
		 $M_{1}$ &  $0.3755$&  $1.8$  &  $0.3755$ &  $0.3755$     & $0.3755$ 
		\\
		\hline
		%\cline{2-3} 
		\rule{0pt}{2.5ex}
		$M_{2}$ & $1.0122$& $0.3$ & $1.0122$ & $1.0122$ &  $1.0122$  \\
		\hline

        \rule{0pt}{2.5ex}
		 $\overline{M}$ & $7.2$& $7.2$  & $7.2$ & $7.2$ &   $7.2$
		\\
		\hline
		%\cline{2-3} 
		\rule{0pt}{2.5ex}
		$R_{1}$ &  $2.8759$&  $2.1$ & $2.857$ & $2.8765$   &  $2.68$  \\
		\hline
		
		\rule{0pt}{3.5ex}
       $R_{2}$ &$1.5323$ &$1.85$&$1.522$ & $1.5323$   & $1.49$
		\\
		\hline

       \rule{0pt}{3.5ex}
       $R_{1}+K_{1}R_{2}$ & $7.4728$& $7.65$ & $7.423$& $7.4734$   &  $7.162$
		\\
		\hline
	\end{tabular}
	
	\label{tab1}
\end{table*}

The following example explains the placement and delivery phase of the case $N<K$.
\begin{example} \label{ex2}
Let $K_1=3, K_2=2, N=3,$ $\alpha = \frac{1}{2}$. For $t=2$, we have, 
$$M_1=\frac{4}{15},  \quad  M_2= \frac{2}{5},$$
$M_1+M_2=\frac{2}{3}$ and $\overline{M}=\frac{16}{5}$.
Divide each file $W_n$ into two sub-files $W^1_n$ and $W^2_n$, for all $n \in [3]$. Further, divide $W^1_n$ into $6$ mini-subfiles, say, $W^1_{n,1}, W^1_{n,2}, \ldots, W^1_{n,6}$, and divide $W^2_n$ into $15$ mini-subfiles, say, $W^2_{n,S},  S \in [6], |S|=2$. Consider the sets 
$$S_1=\{ 1,2 \}, \quad S_2=\{3,4\}, \quad S_3=\{5,6\}.$$
 \noindent \textbf{Placement phase:} The cache contents of the mirrors are as follows,
\begin{align*}
\Lambda_1 &= \{ W^1_{1,k} \oplus W^1_{2,k} \oplus W^1_{3,k}, \forall k \in S_1\} \cup  \{W^2_{n,\{1,2\}} ,\forall  n \in [3]\}\\
\Lambda_2 &=  \{ W^1_{1,k} \oplus W^1_{2,k} \oplus W^1_{3,k},\forall k \in S_2\} \cup  \{W^2_{n,\{3,4\}},\forall  n \in [3] \} \\
 \Lambda_3 &=  \{ W^1_{1,k} \oplus W^1_{2,k} \oplus W^1_{3,k}, \forall k \in S_3\} \cup  \{W^2_{n,\{5,6\}},\forall  n \in [3] \}.
\end{align*}
The cache contents of the users are as follows,
\begin{align*}
Z_1 &=\{ W^2_{n,\{1,3\}}, W^2_{n,\{1,4\}}, W^2_{n,\{1,5\}}, W^2_{n,\{1,6\}} ,\forall n\in [3] \}\\
Z_2 &=\{ W^2_{n,\{2,3\}}, W^2_{n,\{2,4\}}, W^2_{n,\{2,5\}}, W^2_{n,\{2,6\}} ,\forall n\in [3] \}\\
Z_3 &= \{ W^2_{n,\{1,3\}}, W^2_{n,\{2,3\}}, W^2_{n,\{3,5\}}, W^2_{n,\{3,6\}} ,\forall n\in [3] \}\\
Z_4 &= \{ W^2_{n,\{1,4\}}, W^2_{n,\{2,4\}}, W^2_{n,\{4,5\}}, W^2_{n,\{4,6\}} ,\forall n\in [3] \}\\
Z_5 &= \{ W^2_{n,\{1,5\}}, W^2_{n,\{2,5\}}, W^2_{n,\{3,5\}}, W^2_{n,\{4,5\}} ,\forall n\in [3] \}\\
Z_6 &= \{ W^2_{n,\{1,6\}}, W^2_{n,\{2,6\}}, W^2_{n,\{3,6\}}, W^2_{n,\{4,6\}} ,\forall n\in [3] \}.
\end{align*}

\noindent \textbf{Delivery phase:} Let the demand vector be $(1,2,1,3,2,2)$, and $\mathcal{B}=\{1,2,4\}$.

\noindent \textbf{Server to mirrors:} 
\begin{itemize}
\item The server transmits the following messages:\\ 
   $W^1_{1,2}, W^1_{1,3}\oplus W^1_{1,1},  W^1_{1,4},  W^1_{1,5}, W^1_{1,6},
   W^1_{2,1} W^1_{2,3}, W^1_{2,5}\oplus W^1_{2,2},W^1_{2,4},W^1_{2,6} \oplus W^1_{2,2}, W^1_{3,1}, W^1_{3,2},\\ W^1_{3,3}, W^1_{3,5}, W^1_{3,6}.$ 

\item The server also transmits the following coded messages: $\bigoplus_{s\in S} W^2_{d_s, S\backslash \{s\}}$ for all $S \subseteq [6]$ such that $|S|=3$. 
\end{itemize}
Therefore, the rate is
$$R_1=\frac{15}{2 \times 6}+ \frac{20}{2\times 15}= \frac{115}{60}= 1.917.$$

\noindent \textbf{Mirrors to users:} For $m=1$, we have $D_1=\{1,2\}$ and $t_1=2$. The $1^{\text{st}}$ mirror will transmit the following uncoded and coded messages:		
\begin{itemize}

\item Uncoded messages:
\[
\begin{matrix}
W^1_{1,1},  & W^1_{1,2}, &W^1_{1,3}, &W^1_{1,4},&  W^1_{1,5}, &W^1_{1,6}, \\
W^1_{2,1}, & W^1_{2,1}, & W^1_{2,3}, & W^1_{2,4}, & W^1_{2,5}, & W^1_{2,6}.

\end{matrix}
\]
\item Coded messages: $$\bigoplus_{s\in S} W^2_{d_s, S\backslash \{s\}},$$ for all $S \in \mathcal{S} \backslash \{ \{3,4,5\}, \{3,4,6\}, \{3,5,6\}, \{4,5,6\}\}$, where $\mathcal{S}=\{ T \subseteq [6] \ | \ |T|=3\}$.
\item The $1^{\text{st}}$ mirror also transmits $W^2_{1,\{1,2\}}$ and $W^2_{2,\{1,2\}}$ from its cache.
\end{itemize}
Therefore, the rate is 
$$R^{(1)}_2=\frac{12}{2 \times 6}+ \frac{18}{2\times 15}= \frac{8}{5}=1.6.$$
For $m=2$, we have $D_2=\{1,3\}$ and $t_2=2$. The $2^{\text{nd}}$ mirror will transmit the following uncoded and coded messages:
\begin{itemize}
\item Uncoded messages: 
\[
\begin{matrix}
W^1_{1,1},& W^1_{1,2},&W^1_{1,3}, &W^1_{1,4} &  W^1_{1,5}, &W^1_{1,6}, \\
W^1_{3,1}& W^1_{3,2} & W^1_{3,3}, & W^1_{3,4}, & W^1_{3,5}, &W^1_{3,6}.
\end{matrix}
\]
\item Coded messages: $$\bigoplus_{s\in S} W^2_{d_s, S\backslash \{s\}}$$ for all $S \in \mathcal{S} \backslash \{ \{1,2,5\}, \{1,2,6\}, \{2,5,6\}, \{1,5,6\}\}$, where $\mathcal{S}=\{ T \subseteq [6] \ | \ |T|=3\}$.
\item The mirror also transmits $W^2_{1,\{3,4\}}$ and $W^2_{3,\{3,4\}}$ from its cache.
\end{itemize}
Therefore, the rate is
$$R^{(2)}_2=\frac{12}{2 \times 6}+ \frac{18}{2\times 15}= \frac{8}{5}=1.6.$$
For $m=3$, we have $D_3=\{2\}$ and $t_3=1$.  The $3^{\text{rd}}$ mirror will transmit the following uncoded and coded messages:		
\begin{itemize}
\item Uncoded messages:  
\[
\begin{matrix}
W^1_{2,1}& W^1_{2,2} & W^1_{2,3}, & W^1_{2,4}, & W^1_{2,5},  & W^1_{2,6}.
\end{matrix}
\]
\item Coded messages:  $$\bigoplus_{s\in S} W^2_{d_s, S\backslash \{s\}}$$ for all $S \in \mathcal{S} \backslash \{ \{1,2,3\}, \{1,2,4\}, \{1,3,4\}, \{2,3,4\}\}$, where $\mathcal{S}=\{ T \subseteq [6] \ | \ |T|=3\}$.
\item The mirror also transmits $W^2_{2,\{5,6\}}$ from its cache.
\end{itemize}
Therefore, the rate is
$$R^{(3)}_2=\frac{6}{2 \times 6}+ \frac{17}{2\times 15}= \frac{16}{15}=1.0667.$$
Clearly, $R_2=1.6$ and the composite rate is $\overline{R}=6.717$.

\end{example}

\begin{table*}[ht]
	\centering
	\caption{Comparison of schemes with $K_1=3$, $K_2=2$, $N=3$, and $\overline{M}=3.2 \ (t=2, \alpha=\frac{1}{2})$  for Example \ref{ex2}} 
	\begin{tabular}{| c| c | c |c|c| c| }
		\hline
		\rule{0pt}{2.5ex}
		\makecell{} & \makecell{KNMD Scheme\\\cite{karamchandani2016hierarchical}}  & \makecell{ZWXWL Scheme\\\cite{7763243}} & \makecell{ZWXWLL Scheme\\\cite{8481555}} &\makecell{WWCY Scheme\\\cite{wang2019reduce}}  & \makecell{Proposed scheme \\ (For same $\overline{M}$)}  \\ [2pt]
		\hline
  
		\rule{0pt}{2.5ex}
		 $M_{1}$ &  $0.267$&  $0.967$  &  $0.267$ &  $0.267$ &   $0.267$ 
		\\
		\hline
		%\cline{2-3} 
		\rule{0pt}{2.5ex}
		$M_{2}$ & $0.4$& $0.05$ & $0.4$ & $0.4$   &  $0.4$  \\
		\hline

        \rule{0pt}{2.5ex}
		 $\overline{M}$ & $3.2$& $3.2$  & $3.2$ & $3.2$ &  $3.2$
		\\
		\hline
		%\cline{2-3} 
		\rule{0pt}{2.5ex}
		$R_{1}$ &  $3.076$&  $2.033$ & $3.413$ & $3.07$   &  $1.917$  \\
		\hline
		
		\rule{0pt}{3.5ex}
       $R_{2}$ &$1.623$ &$1.95$&$1.618$ & $1.623$    & $1.6$
		\\
		\hline

       \rule{0pt}{3.5ex}
       $R_{1}+K_{1}R_{2}$ & $7.945$& $7.88$ & $8.267$& $7.939$   &  $6.717$
		\\
		\hline
	\end{tabular}
	
	\label{tab2}
\end{table*}

For Example, \ref{ex2}, the rates of the proposed scheme are compared with the existing schemes in Table \ref{tab2} for the same parameters $N, K_1, K_2$ and $\overline{M}=3.2$. We kept the value of global memory the same for all the existing schemes and computed the value of $R_1$, $R_2$, and the composite rate $\overline{R}$. It is clear from the table that the proposed scheme achieves the lowest composite rate ($\overline{R}=6.717$) for the global memory $\overline{M}=3.2$. The values of $R_1$ and $R_2$ are also minimum for the proposed scheme. The computation of the rates of the existing schemes and the comparison are discussed in detail in the next section.

\section{Comparison with the state-of-the-art} \label{compare}

In this section, we compare the composite rate of our proposed scheme to that of existing schemes with respect to global memory. Additionally, we compare \( R_1 \), \( R_2 \), and \( R_1 + R_2 \) with \( M_1 \) and \( M_2 \).

\subsection{Comparison with the KNMD scheme \cite{karamchandani2016hierarchical}}

Since our scheme covers more memory points $(M_1, M_2)$ for $N \leq K$, we compare it with existing schemes that operate for $N \geq K$ for the case $N=K$. In this comparison, we set $N=K$, where the number of users equals the number of files, to ensure a fair comparison between our scheme and the existing ones. 

Our scheme is designed to operate within the memory range as mentioned in \eqref{global memory}. The KNMD scheme introduces three memory regions and their respective optimal values of $\alpha'$ and $\beta$ \footnote{For the KNMD scheme, we will use $\alpha'$ instead of $\alpha$ to avoid confusion with $\alpha$ used in this paper. Similarly, for the ZWXWLL scheme and the WWCY scheme, we use $\alpha''$ and $\alpha'''$, respectively, instead of $\alpha$.}{[Eq. \ref{optimal}]}. In our scheme, we will establish the criteria for our alpha values to ascertain our placement within a particular memory region for $t=K_2$. In this section, we consider $t=K_2$. By establishing these criteria, we can align our scheme with the corresponding memory region of the KNMD scheme. The KNMD scheme provides the following rates for the first and second layers: 
\

\begin{align}
    R_{1}(\alpha', \beta) &\triangleq \alpha' {{}\cdot{}} K_{2}  {{}\cdot{}} r\left(\frac{M_{1}}{\alpha' N}, K_{1}\right)
     + (1-\alpha'){{}\cdot{}}r\left(\frac{(1-\beta) M_{2}}{(1-\alpha') N}, K_{1} K_{2}\right) , \\
    R_{2}(\alpha', \beta) &\triangleq  \alpha' \cdot r\left(\frac{\beta M_{2}}{\alpha' N}, K_{2}\right)
    + (1-\alpha') \cdot r\left(\frac{(1-\beta) M_{2}}{(1-\alpha') N}, K_{2}\right),
\end{align}
where $\alpha', \beta \in [0,1]$, and
    \begin{small}
   \begin{equation}
    r\left(\frac{M}{N}, K\right) \triangleq\left[K \cdot\left(1-\frac{M}{N}\right) \cdot \frac{N}{K M}\left(1-\left(1-\frac{M}{N}\right)^{K}\right)\right]^{+} \nonumber
\end{equation}
\end{small}

The KNMD scheme considers the following three distinct regions of $M_{1}$ and $M_{2}$,

Region I: $M_{1}+M_{2} K_{2} \geq N$ and $0 \leq M_{1} \leq N / 4$,

Region II: $M_{1}+M_{2} K_{2} < N$, 

Region III: $M_{1}+M_{2} K_{2} \geq N$ and $N / 4<M_{1} \leq N$.
\label{ranges}

and corresponding optimal values of $\alpha'$ and $\beta$ as:

\begin{equation}
\left({\alpha'}^{\star}, \beta^{\star}\right) \triangleq \begin{cases}
\left(\frac{M_{1}}{N}, \frac{M_{1}}{N}\right) & \text { in Region I, } \\ \left(\frac{M_{1}}{M_{1}+M_{2} K_{2}}, 0\right) & \text { in Region II, } \\ \left(\frac{M_{1}}{N}, \frac{1}{4}\right) & \text { in Region III. }\end{cases}\label{optimal}\end{equation}

\begin{figure*}
    \setlength{\abovecaptionskip}{0pt} % Reduce space before caption
    \centering
    \includegraphics[width=1.3\textwidth, height=0.5\textheight, keepaspectratio]{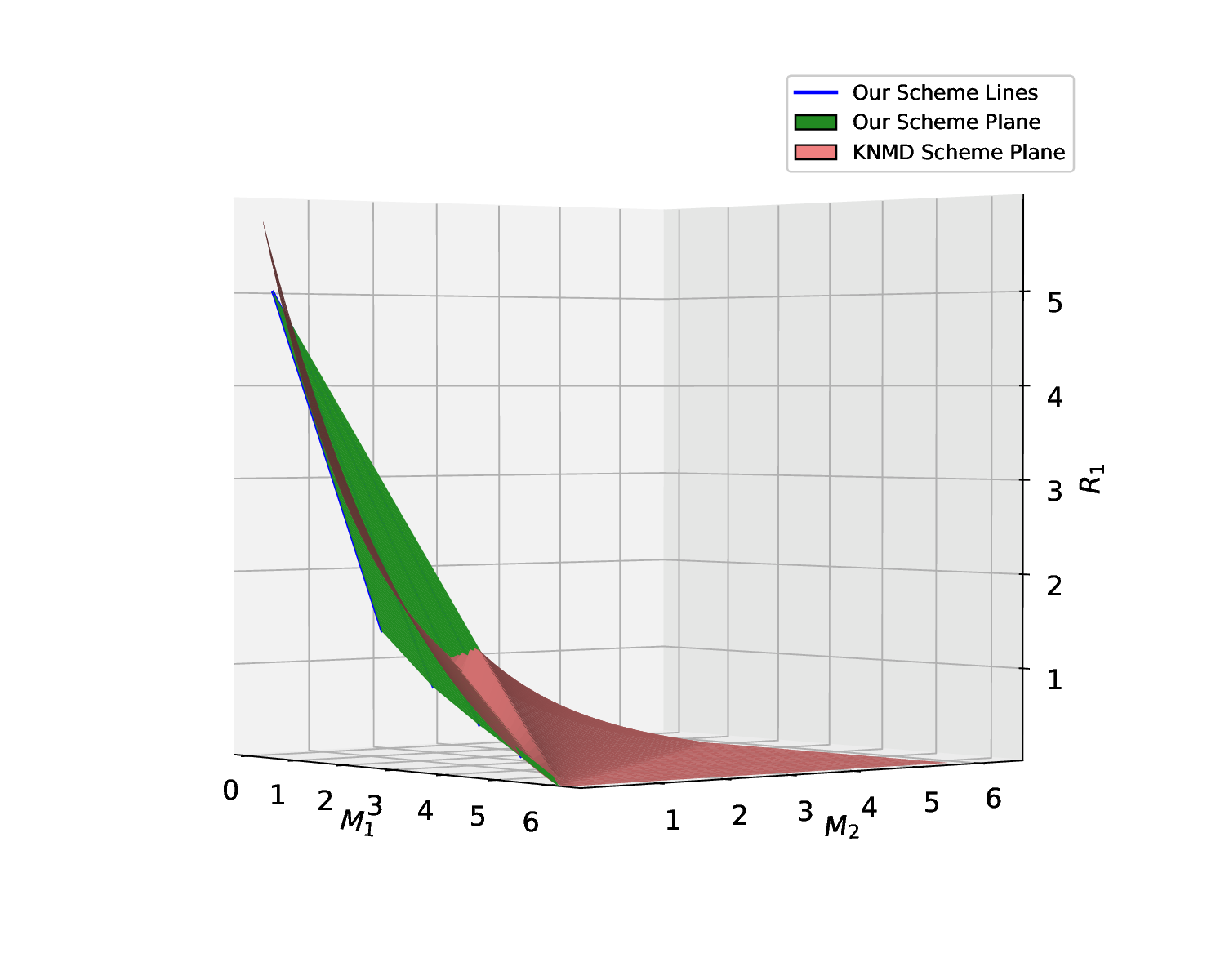}
    \caption{Comparing $R_1$ Trade-offs: KNMD Scheme vs. Our Scheme}
    \label{knmd r1}
    \vspace{0.3cm} % Adjust vertical space between the figures
    \includegraphics[width=1.3\textwidth, height=0.5\textheight, keepaspectratio]{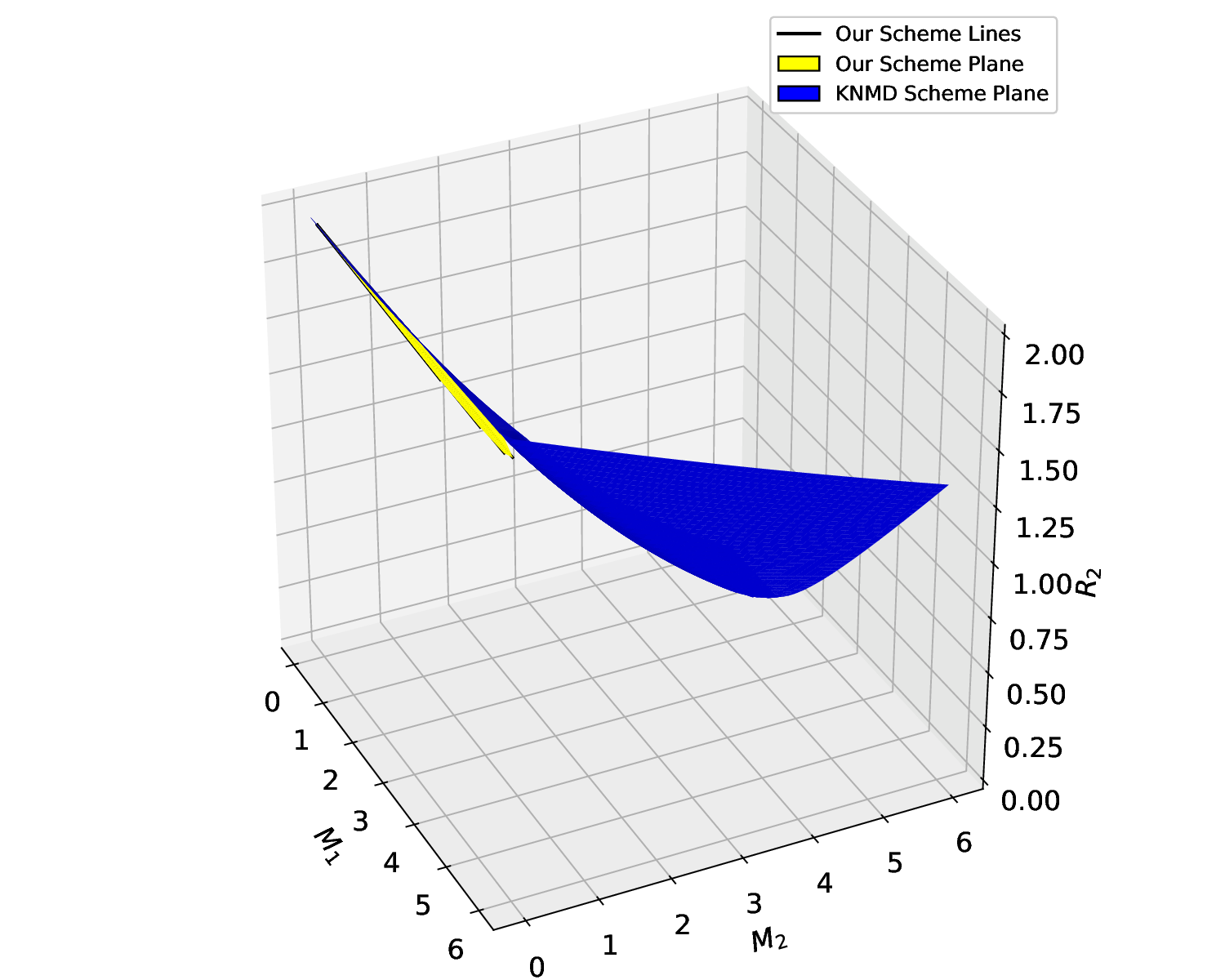}
    \caption{Comparing $R_2$ Trade-offs: KNMD Scheme vs. Our Scheme}
    \label{knmd r2}
\end{figure*}

\begin{figure*}
    \setlength{\abovecaptionskip}{0pt} % Reduce space before caption
    \centering
    \includegraphics[width=1.3\textwidth, height=0.5\textheight, keepaspectratio]{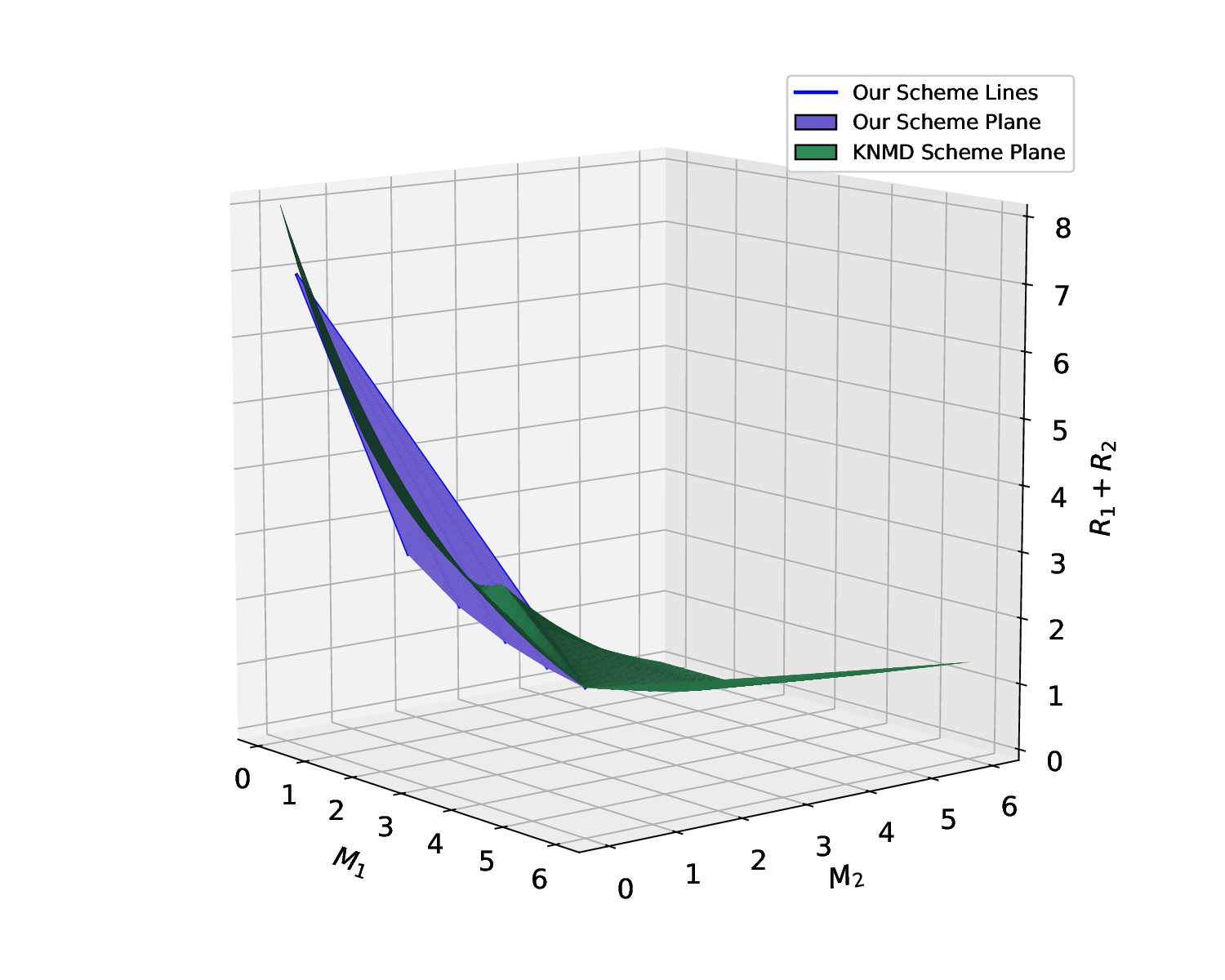}
    \caption{Comparing $R_1$+ $R_2$ Trade-offs: KNMD Scheme vs. Our Scheme}
    \label{knmd r1r2}
    \vspace{0.5cm} % Adjust vertical space between the figures
    \includegraphics[width=1.3\textwidth, height=0.5\textheight, keepaspectratio]{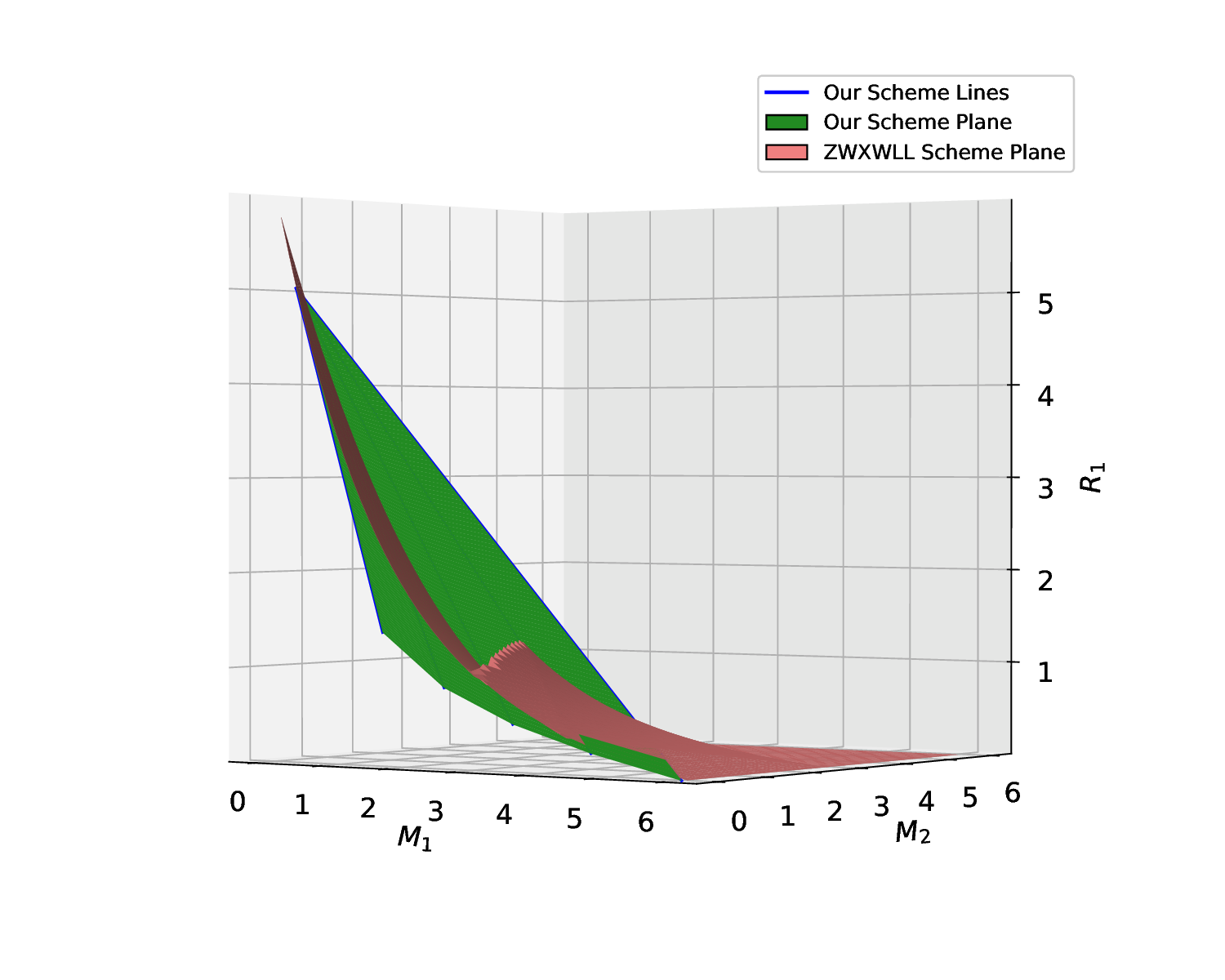}
    \caption{Comparing $R_1$ Trade-offs: ZWXWLL Scheme vs. Our Scheme}
    \label{zwxwll r1}
\end{figure*}

\begin{figure*}
    \setlength{\abovecaptionskip}{0pt} % Reduce space before caption
    \centering
    \includegraphics[width=1.3\textwidth, height=0.5\textheight, keepaspectratio]{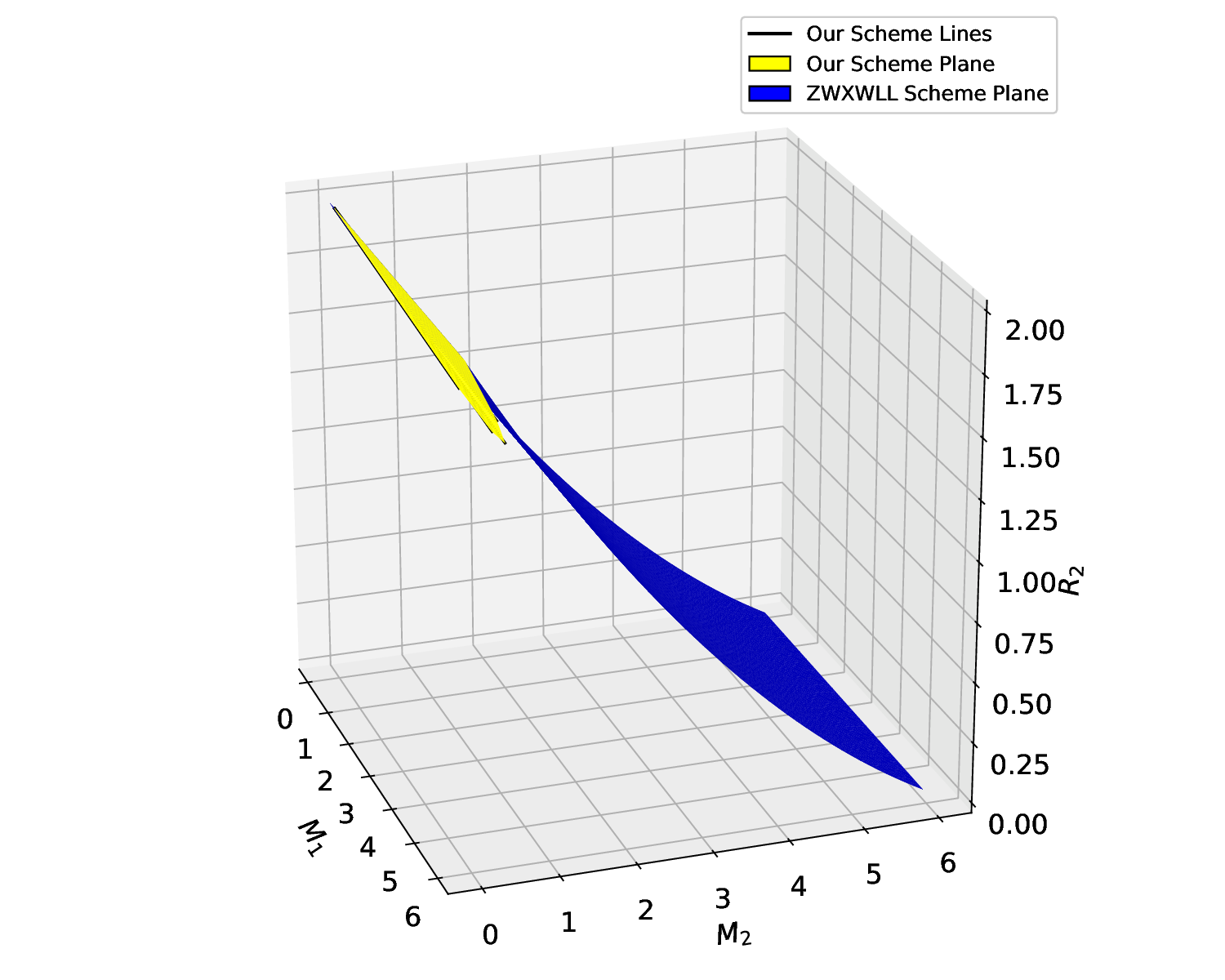}
    \caption{Comparing $R_2$ Trade-offs: ZWXWLL Scheme vs. Our Scheme}
    \label{zwxwll r2}
    \vspace{0.1cm} % Adjust vertical space between the figures
    \includegraphics[width=1.3\textwidth, height=0.5\textheight, keepaspectratio]{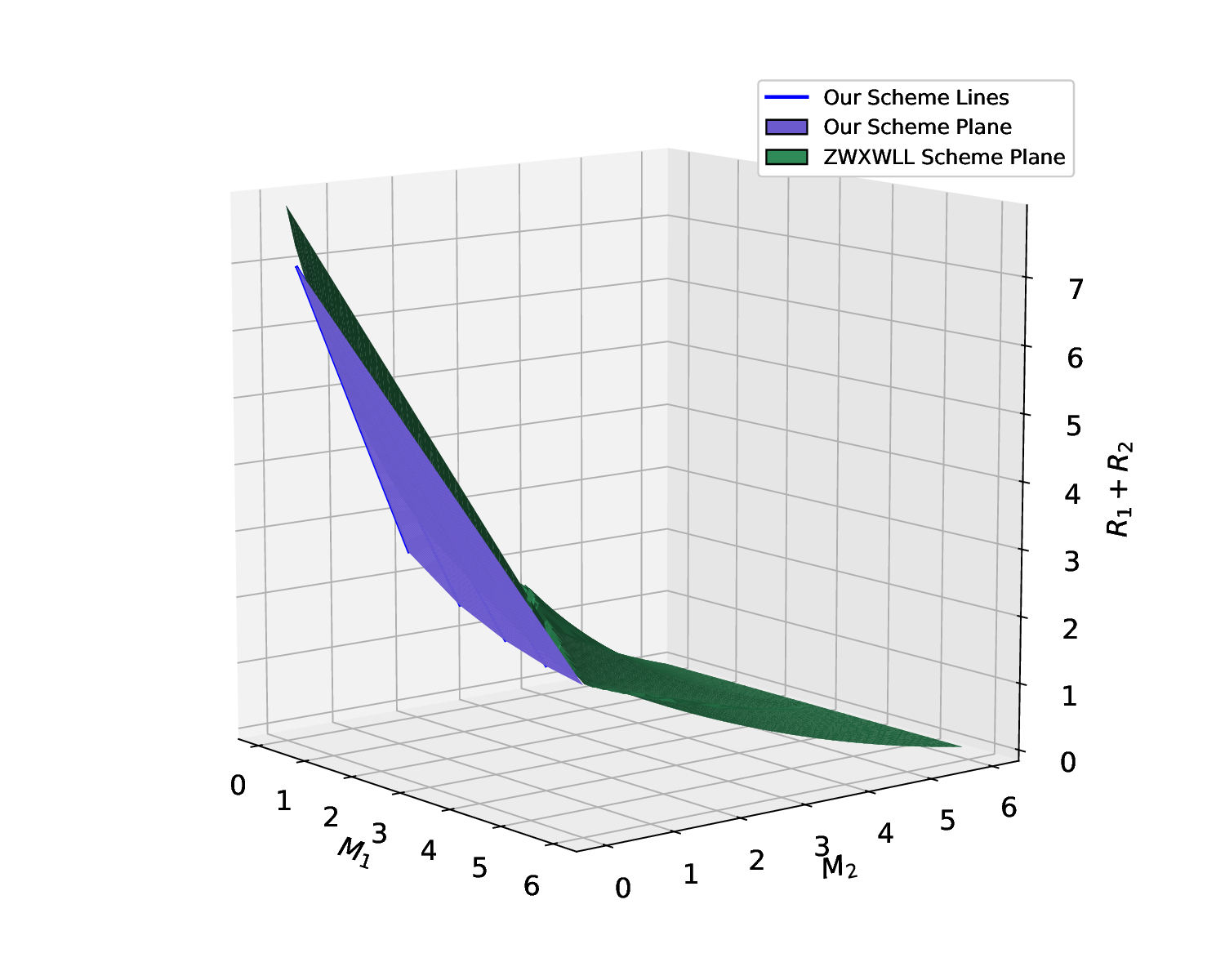}
    \caption{Comparing $R_1$+$R_2$ Trade-offs: ZWXWLL Scheme vs. Our Scheme}
    \label{zwxwll r1r2}
\end{figure*}

\begin{figure*}
    \setlength{\abovecaptionskip}{0pt} % Reduce space before caption
    \centering
    \includegraphics[width=1.3\textwidth, height=0.5\textheight, keepaspectratio]{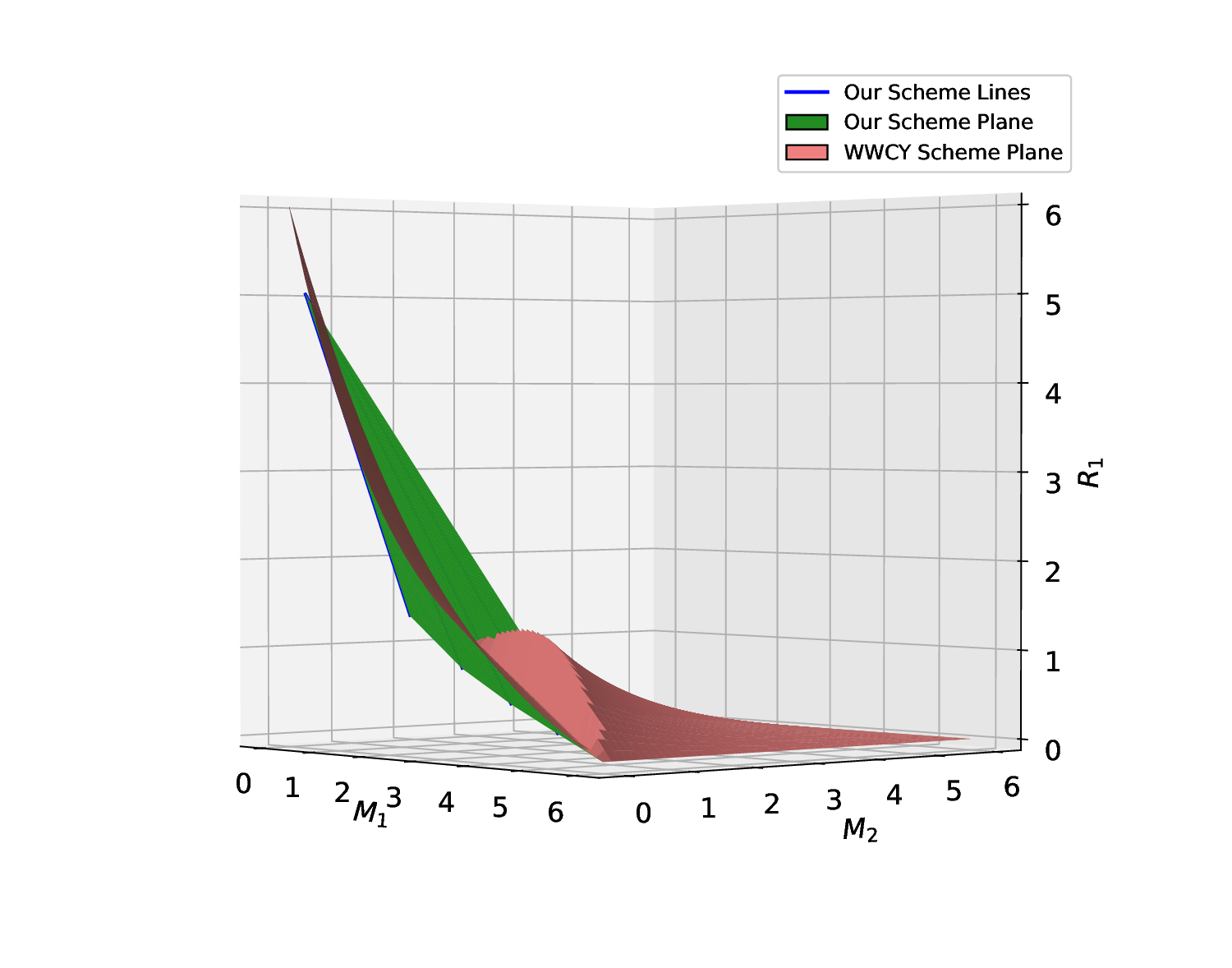}
    \caption{Comparing $R_1$ Trade-offs: WWCY Scheme vs. Our Scheme}
    \label{wwcy r1}
    \vspace{0.1cm} % Adjust vertical space between the figures
    \includegraphics[width=1.3\textwidth, height=0.5\textheight, keepaspectratio]{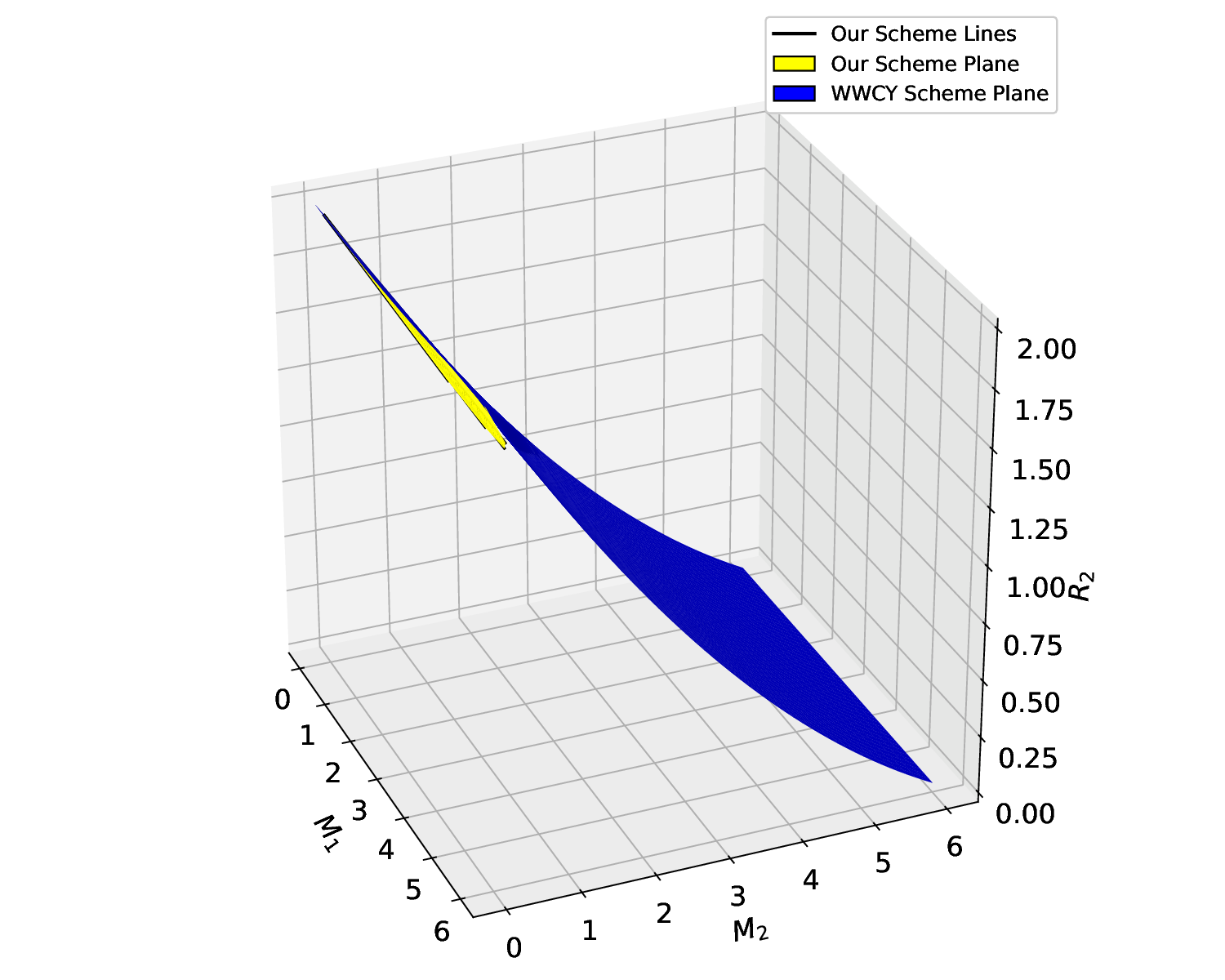}
    \caption{Comparing $R_2$ Trade-offs: WWCY Scheme vs. Our Scheme}
    \label{wwcy r2}
\end{figure*}

\begin{figure*}
    % Reduce space before caption
    \centering
    \includegraphics[width=1.9\linewidth, height=0.5\textheight, keepaspectratio]{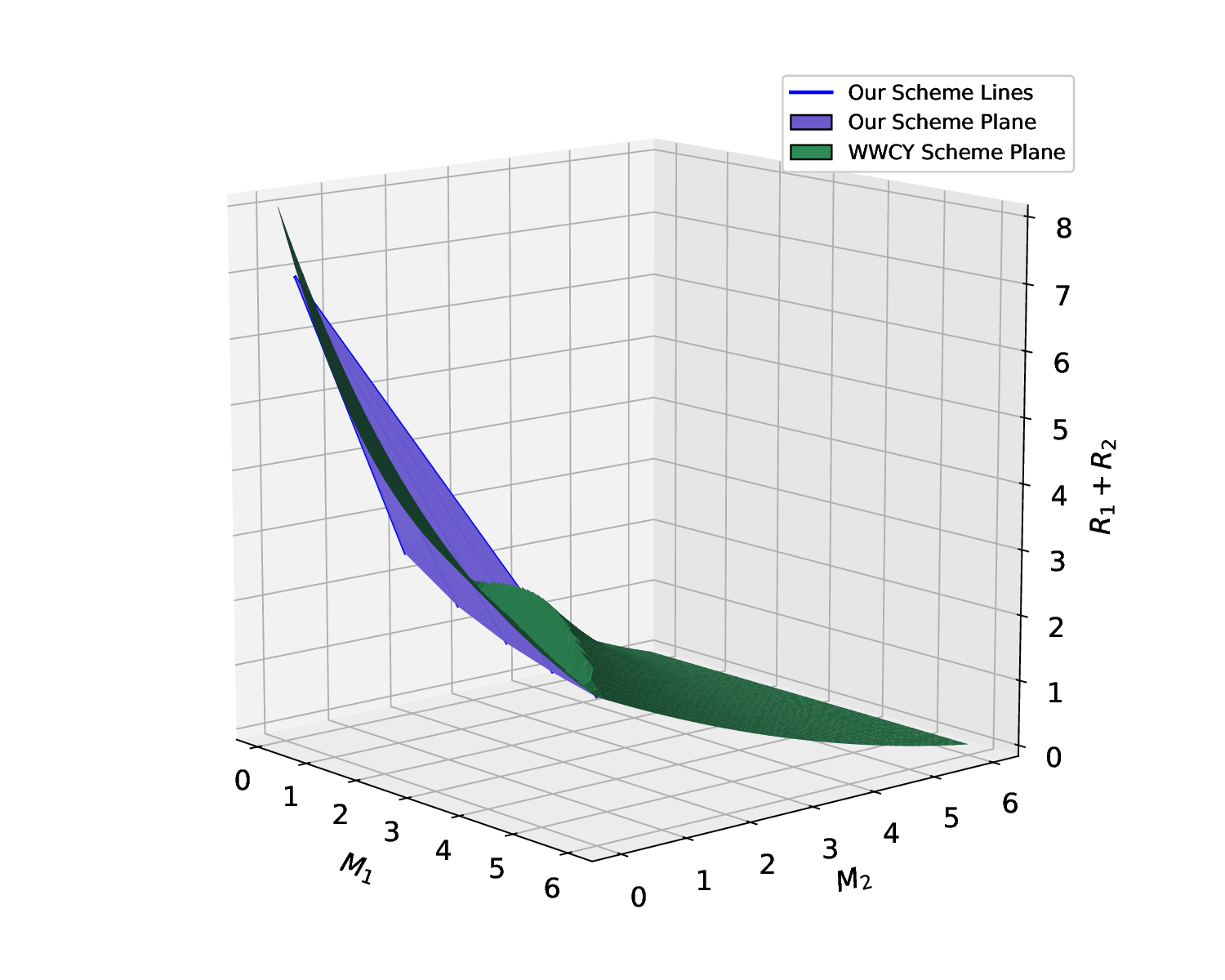}
    \caption{Comparing  $R_1$+$R_2$ Trade-offs: WWCY Scheme vs. Our Scheme}
    \label{wwcy r1r2}
    
\end{figure*}

\begin{thm}
	For a given pair of positive integers, $K_1$ and $K_2$, our scheme is classified into three distinct regions of the KNMD scheme based on the ranges of the variable $\alpha$. These regions are defined as follows:
\begin{equation}
\begin{cases}
\text{Region I:} & \alpha \leq \min\left[\frac{A-K}{A-\frac{1}{K_1}},B\right] \\
\text{Region II:} & \alpha > \frac{A-K}{A-\frac{1}{K_1}} \\
\text{Region III:} & B < \alpha \leq \frac{A-K}{A-\frac{1}{K_1}}\ \, .
\end{cases}
\end{equation}
where $A=K_2^2 - \frac{K(K_2-1)}{{K \choose {K_2}}}$ and $B =\frac{KK_{1}}{4} \left[\frac{{K \choose K_2}-4}{{K \choose K_2}-KK_{1}}\right]$.
\label{thm_region}
\end{thm}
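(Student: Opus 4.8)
The plan is to exploit the fact that, once we fix $t=K_2$ and (following the stated convention for comparing with KNMD) set $N=K$, both $M_1+M_2K_2$ and $M_1$ become \emph{affine} functions of $\alpha$. The three KNMD regions are cut out exactly by the two boundaries $M_1+M_2K_2=N$ and $M_1=N/4$, so the theorem reduces to (i) writing these two affine functions explicitly, (ii) solving each boundary equation for $\alpha$ to recover the thresholds $\frac{A-K}{A-1/K_1}$ and $B$, and (iii) reading off the inequality directions from the signs of the two slopes.

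First I would specialize \eqref{memory1} and \eqref{memory2} to $t=K_2$. Since $\binom{K-K_2}{t-K_2}=\binom{K-K_2}{0}=1$ here, they collapse to $M_1=\frac{\alpha K_2}{K}+\frac{(1-\alpha)N}{\binom{K}{K_2}}$ and $M_2=\frac{(1-\alpha)N}{\binom{K}{K_2}}\bigl[\binom{K-1}{K_2-1}-1\bigr]$. Using the identity $\binom{K-1}{K_2-1}=\frac{K_2}{K}\binom{K}{K_2}$ together with $K_2/K=1/K_1$, a short computation gives $M_1+M_2K_2=\frac{\alpha K_2}{K}+\frac{(1-\alpha)N}{K}\bigl(K_2^2-\frac{K(K_2-1)}{\binom{K}{K_2}}\bigr)=\frac{\alpha K_2}{K}+\frac{(1-\alpha)NA}{K}$, which at $N=K$ becomes the affine form $M_1+M_2K_2=A+\alpha\bigl(\frac{1}{K_1}-A\bigr)$. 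Likewise $M_1=\frac{K}{\binom{K}{K_2}}+\alpha\bigl(\frac{1}{K_1}-\frac{K}{\binom{K}{K_2}}\bigr)$ at $N=K$.

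Next I would solve the two boundary equations. Setting $M_1+M_2K_2=N=K$ and isolating $\alpha$ yields $\alpha=\frac{K-A}{1/K_1-A}=\frac{A-K}{A-1/K_1}$, while setting $M_1=N/4=K/4$ and clearing denominators yields $\alpha=\frac{KK_1}{4}\cdot\frac{\binom{K}{K_2}-4}{\binom{K}{K_2}-KK_1}=B$; these are precisely the thresholds in the statement. I would then fix the inequality directions by inspecting slopes: the slope of $M_1+M_2K_2$ is $\frac{1}{K_1}-A<0$ (since $A\approx K_2^2\geq\frac{1}{K_1}$ for $K_1,K_2\geq 2$), so $M_1+M_2K_2\geq N\iff\alpha\leq\frac{A-K}{A-1/K_1}$, and its complement gives Region II exactly. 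Restricting to $\alpha\leq\frac{A-K}{A-1/K_1}$, the slope of $M_1$ is $\frac{1}{K_1}-\frac{K}{\binom{K}{K_2}}$, which is positive precisely when $\binom{K}{K_2}>KK_1$ (the sign of $B$'s denominator); under this condition $M_1\leq N/4\iff\alpha\leq B$, separating Region I, $\alpha\leq\min[\frac{A-K}{A-1/K_1},B]$, from Region III, $B<\alpha\leq\frac{A-K}{A-1/K_1}$.

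The main obstacle I anticipate is the bookkeeping of these two sign conditions rather than the algebra: the clean region description presupposes both $A>1/K_1$ and $\binom{K}{K_2}>KK_1$, so I would either verify that these hold throughout the regime of interest ($K_1,K_2\geq2$) or record them as standing hypotheses, since reversing either slope would flip the corresponding $\min$/interval and break the stated partition. The binomial collapse at $t=K_2$ and the identity $\binom{K-1}{K_2-1}=\frac{K_2}{K}\binom{K}{K_2}$ carry the remaining computational weight, and the substitution $K_2/K=1/K_1$ is what forces the thresholds into the compact closed forms appearing in the theorem.
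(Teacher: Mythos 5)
Your proposal follows essentially the same route as the paper's proof: specialize the memory expressions to $t=K_2$ and $N=K$, write $M_1+M_2K_2$ and $M_1$ as affine functions of $\alpha$, and solve the two KNMD boundary equations $M_1+M_2K_2=N$ and $M_1=N/4$ to obtain the thresholds $\frac{A-K}{A-\frac{1}{K_1}}$ and $B$. If anything, you are more careful than the paper, which divides by $A-\frac{1}{K_1}$ and by ${K \choose K_2}-KK_1$ in this proof without flagging the sign conditions (these are only established separately, via Lemma 1, in the proofs of Theorems 2 and 3), whereas you correctly identify them as standing hypotheses needed for the inequality directions.
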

Proof: Theorem \ref{thm_region} is proved in Appendix A.

\begin{thm}\label{thm2}
For integers $K_1 \geq 2$ and $K_2 \geq 2$ such that $K_1 > K_2$, the proposed scheme falls into Region II for $t=K_2$.
\end{thm}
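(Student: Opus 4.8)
The plan is to lean on Theorem~\ref{thm_region}, which already expresses the KNMD region membership of our scheme at $t=K_2$ purely in terms of the interval occupied by $\alpha$. By that theorem the scheme lies in Region~II exactly when $\alpha > \frac{A-K}{A-\frac{1}{K_1}}$, with $A=K_2^2-\frac{K(K_2-1)}{\binom{K}{K_2}}$ and $K=K_1K_2$. Since the admissible range is $\alpha\in[0,1]$, it suffices to prove that this threshold is strictly negative whenever $K_1>K_2\geq 2$: then every admissible $\alpha\geq 0$ satisfies the Region~II inequality, and simultaneously the Region~I and Region~III conditions---both of which require $\alpha$ not to exceed the same (now negative) threshold---become vacuous. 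Thus the whole statement reduces to a single sign analysis of the fraction $\frac{A-K}{A-\frac{1}{K_1}}$.

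To sign the numerator I would write
\[
A-K = K_2(K_2-K_1)-\frac{K(K_2-1)}{\binom{K}{K_2}},
\]
and observe that this is exactly where the hypothesis $K_1>K_2$ is essential: it forces $K_2-K_1\leq -1$, so the first term is strictly negative, while the subtracted binomial term is nonnegative because $K_2\geq 2$ and $\binom{K}{K_2}>0$. Hence $A-K<0$ with no further work.

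To sign the denominator I would show $A>\frac{1}{K_1}$ by bounding the correction term from above. Using the elementary estimate $\binom{K}{K_2}=\binom{K_1K_2}{K_2}\geq K_1^{K_2}$ gives $\frac{K(K_2-1)}{\binom{K}{K_2}}\leq \frac{K_2(K_2-1)}{K_1^{K_2-1}}$, and since $K_1\geq 3$ (forced by $K_1>K_2\geq 2$) this quantity stays below $1$ over the whole admissible range. Consequently $A>K_2^2-1\geq 3>\frac{1}{K_1}$, so the denominator is positive. Combining the two signs yields $\frac{A-K}{A-\frac{1}{K_1}}<0$, which finishes the argument.

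I expect the only genuine subtlety to be the denominator step: one must certify that $\frac{K(K_2-1)}{\binom{K}{K_2}}$ never grows large enough to drag $A$ below $\frac{1}{K_1}$. This is comfortably true---$\binom{K_1K_2}{K_2}$ grows super-polynomially in $K_1$ while $K(K_2-1)$ grows only linearly---but it should be recorded as a clean uniform bound such as $\binom{K}{K_2}\geq K_1^{K_2}$ rather than verified case by case. Everything else is the direct substitution $t=K_2$ (using $\binom{K-K_2}{t-K_2}=\binom{K-K_2}{0}=1$ in \eqref{memory1} and \eqref{memory2}) together with the sign of $K_2-K_1$; an equivalent route, checking the KNMD Region~II defining inequality $M_1+M_2K_2<N$ directly at $N=K$, leads to the same threshold after applying $K_2\binom{K-1}{K_2-1}=\tfrac{K_2^2}{K}\binom{K}{K_2}$.
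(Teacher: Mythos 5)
Your proof is correct and shares the paper's overall skeleton: both arguments reduce the theorem to showing that the threshold $\frac{A-K}{A-\frac{1}{K_1}}$ from Theorem \ref{thm_region} is strictly negative, and both do this by signing numerator and denominator separately, with essentially identical numerator steps (the hypothesis $K_1>K_2$ forces $K_2^2-K=K_2(K_2-K_1)<0$, while the subtracted term $\frac{K(K_2-1)}{\binom{K}{K_2}}$ is nonnegative). The genuine difference is the denominator step. The paper invokes its Lemma \ref{lem1}, namely $\binom{K}{K_2}>KK_2$ for all $K_1,K_2\geq 2$ except $K_1=K_2=2$, which is proved by a product expansion with case analysis, and then the excluded case $K_1=K_2=2$ is checked separately inside the proof of the theorem. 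You instead use the standard estimate $\binom{K}{K_2}=\binom{K_1K_2}{K_2}\geq\left(\frac{K_1K_2}{K_2}\right)^{K_2}=K_1^{K_2}$, which under the theorem's hypothesis (so $K_1\geq K_2+1\geq 3$) gives $\frac{K(K_2-1)}{\binom{K}{K_2}}\leq\frac{K_2(K_2-1)}{K_1^{K_2-1}}<1$ and hence $A>K_2^2-1\geq 3>\frac{1}{K_1}$; the inequality $\frac{K_2(K_2-1)}{K_1^{K_2-1}}<1$ does hold uniformly in the admissible range (worst case $K_1=K_2+1$: it equals $\frac{2}{3}$ at $K_2=2$ and decreases thereafter since $(K_2+1)^{K_2-1}\geq(K_2+1)^2>K_2(K_2-1)$ for $K_2\geq 3$), though you should record that one-line verification rather than asserting it. Your route is self-contained and avoids the paper's lemma and its case analysis entirely, since the exceptional pair $K_1=K_2=2$ is already excluded by the hypothesis $K_1>K_2$; the paper's Lemma \ref{lem1} is stated in greater generality because it is reused for Theorem \ref{thm3} (via Lemma \ref{cor2}), where $K_1\leq K_2$, which your bound as stated would not cover. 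Both arguments are valid and yield the same conclusion.
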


\begin{thm}\label{thm3}
For integers $K_1 \geq 2$ and $K_2 \geq 2$ except $K_1=K_2=2$ such that $K_1 \leq K_2$,  the proposed scheme will not fall into Region III for $t=K_2$.
\end{thm}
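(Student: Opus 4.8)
The plan is to show that the half-open interval defining Region III in Theorem~\ref{thm_region}, namely $\left(B,\ \frac{A-K}{A-\frac{1}{K_1}}\right]$, is empty, so that no admissible $\alpha\in[0,1]$ can ever land in it. Concretely, I will prove the two bounds $\frac{A-K}{A-\frac{1}{K_1}}\le 1$ and $B>1$; together these give $\frac{A-K}{A-\frac{1}{K_1}}\le 1<B$, which forces the interval to be empty and hence establishes that the scheme does not fall into Region III for $t=K_2$.

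First I would record the trivial reductions. Since $K_1\le K_2$ with $K_1,K_2\ge 2$ and the case $K_1=K_2=2$ excluded, the only possibility with $K_2=2$ is removed, so throughout we have $K_2\ge 3$; this keeps certain exponents nonnegative. With $K=K_1K_2$, the upper bound $\frac{A-K}{A-\frac{1}{K_1}}\le 1$ is equivalent to $K\ge \frac{1}{K_1}$ once the denominator $A-\frac{1}{K_1}$ is positive, and $K=K_1K_2\ge 4>\frac{1}{K_1}$ holds at once. Similarly, after clearing the (positive) denominator of $B$, the inequality $B\ge 1$ collapses, after the $\binom{K}{K_2}$ terms cancel, to $K K_1\ge 4$, i.e.\ $K_1^2K_2\ge 4$; this is immediate from $K_1,K_2\ge 2$, and in fact $K_1^2K_2\ge 12$ yields the strict gap $B>1$.

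The crux—and the main obstacle—is that every one of these manipulations hinges on the sign of the denominator of $B$, namely $\binom{K}{K_2}-KK_1$, and on $A-\frac{1}{K_1}>0$. This is precisely the quantity that turns negative at $K_1=K_2=2$ (where $\binom{4}{2}=6<8=KK_1$), which is exactly why that case must be excluded. I would therefore isolate the sub-lemma $\binom{K_1K_2}{K_2}>K_1^2K_2$ and prove it by expanding the binomial as a product of ratios,
\[
\binom{K_1K_2}{K_2}=\prod_{i=0}^{K_2-1}\frac{K_1K_2-i}{K_2-i},
\]
noting that each factor is at least $K_1$ (equivalently $(K_1-1)i\ge0$) while the last factor, at $i=K_2-1$, equals $K_2(K_1-1)+1$. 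This gives $\binom{K_1K_2}{K_2}\ge K_1^{\,K_2-1}\bigl[K_2(K_1-1)+1\bigr]$, and dividing by $K_1^2K_2$ leaves $K_1^{\,K_2-3}\bigl[(K_1-1)+\tfrac{1}{K_2}\bigr]$, which exceeds $1$ because $K_2\ge 3$ makes the power nonnegative and $K_1\ge 2$ makes the bracket exceed $1$.

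With the sub-lemma in hand, $\binom{K}{K_2}-KK_1>0$, so the denominator of $B$ is positive (legitimizing $B>1$), and the same bound forces the correction term in $A=K_2^2-\frac{K(K_2-1)}{\binom{K}{K_2}}$ to be smaller than $K_2$, whence $A>K_2^2-K_2>\frac{1}{K_1}$ and the threshold bound is justified. Assembling the pieces gives $\frac{A-K}{A-\frac{1}{K_1}}\le 1<B$, so the Region III interval is empty. A numerical check at the smallest admissible case $(K_1,K_2)=(2,3)$, where $B=6$ while the threshold is $\approx 0.30$, confirms the gap and indicates it is not tight, so the argument should leave comfortable slack in all cases.
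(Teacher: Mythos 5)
Your proposal is correct, and its core is the same as the paper's: everything reduces to showing $B>1$, and your observation that clearing the positive denominator makes the $\binom{K}{K_2}$ terms cancel and collapses $B\ge 1$ to $KK_1\ge 4$ is exactly the paper's computation $B-1=\binom{K}{K_2}\bigl(\tfrac{KK_1-4}{4}\bigr)\big/\bigl(\binom{K}{K_2}-KK_1\bigr)$. Where you genuinely diverge is the key positivity fact $\binom{K}{K_2}>KK_1$: the paper derives it from its Lemma 1, which states $\binom{K}{K_2}>KK_2$ for all $K_1,K_2\ge 2$ except $K_1=K_2=2$ (proved by a product expansion with factors bounded below by $1$), and then simply invokes $KK_1\le KK_2$; you instead prove the needed inequality directly by bounding each factor of $\binom{K_1K_2}{K_2}=\prod_{i=0}^{K_2-1}\frac{K_1K_2-i}{K_2-i}$ below by $K_1$ and keeping the last factor exact, which requires $K_2\ge 3$ — legitimately available here, since $K_1\le K_2$ together with the exclusion of $K_1=K_2=2$ rules out $K_2=2$. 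Your version is self-contained and tailored to this theorem; the paper routes through the ordering-free Lemma 1 because that lemma does double duty in the proof of Theorem 2 as well. Your remaining steps — showing $\frac{A-K}{A-\frac{1}{K_1}}\le 1$ so that the Region III interval is literally empty, and the bound $A>K_2^2-K_2>\frac{1}{K_1}$ needed to justify it — are correct but strictly unnecessary: since $\alpha\le 1$, the single inequality $B>1$ already makes $\alpha>B$ (hence Region III) impossible, which is all the paper proves.
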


The proofs of Theorem \ref{thm2} and \ref{thm3} are given in Appendix A. The following result can be directly obtained from Theorem \ref{thm2} and \ref{thm3}.

\begin{cor}\label{cor3}
For integers $K_1 \geq 2$ and $K_2 \geq 2$ except $K_1=K_2=2$, the proposed scheme either falls into Region I or Region II.
\end{cor}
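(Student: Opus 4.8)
The plan is to obtain Corollary \ref{cor3} purely as a logical combination of Theorems \ref{thm2} and \ref{thm3}, with no fresh computation. The starting observation is that Theorem \ref{thm_region}, applied at the fixed choice $t = K_2$ considered throughout this subsection, guarantees a trichotomy: every $\alpha \in [0,1]$ places the proposed scheme into exactly one of Region I, Region II, or Region III, since the three defining conditions on $\alpha$ exhaust and partition the admissible range. Consequently, to prove the corollary it suffices to rule out Region III across the entire parameter set named in the statement.

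First I would split the parameter space into the two complementary cases that already appear as hypotheses in the two theorems. In the case $K_1 > K_2$, Theorem \ref{thm2} asserts that the scheme falls into Region II; since Region II is one of the two permitted outcomes, the claim holds immediately. In the complementary case $K_1 \leq K_2$ (with $K_1, K_2 \geq 2$ and excluding the single point $K_1 = K_2 = 2$), Theorem \ref{thm3} asserts that the scheme does not fall into Region III. Combined with the trichotomy from Theorem \ref{thm_region}, this forces the scheme to lie in Region I or Region II.

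The concluding step is to verify that these two cases exhaust the domain of the corollary. The case $K_1 > K_2$ covers all pairs satisfying the strict inequality, while the case $K_1 \leq K_2$ accounts for the remainder; the only pair removed from the latter, namely $K_1 = K_2 = 2$, is precisely the pair excluded in the corollary's statement. Hence every $(K_1, K_2)$ with $K_1, K_2 \geq 2$ and $(K_1, K_2) \neq (2,2)$ is assigned to Region I or Region II, which is exactly the assertion.

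Since the result is a direct corollary, there is no genuine analytical obstacle; the only point requiring care is the bookkeeping, that is, confirming that the union of the hypotheses of Theorems \ref{thm2} and \ref{thm3} coincides with the claimed domain and that the excluded point $K_1 = K_2 = 2$ is handled consistently in both statements. All the quantitative effort, namely the sign analysis of the thresholds $\frac{A-K}{A-\frac{1}{K_1}}$ and $B$, has already been discharged inside the proofs of those two theorems in Appendix A, so nothing further is needed here.
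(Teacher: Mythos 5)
Your proposal is correct and matches the paper's own reasoning: the paper obtains Corollary \ref{cor3} directly by combining Theorem \ref{thm2} (for $K_1 > K_2$, the scheme is in Region II) with Theorem \ref{thm3} (for $K_1 \leq K_2$, excluding $K_1=K_2=2$, Region III is ruled out), exactly the case split and trichotomy argument you give. Your write-up simply makes explicit the bookkeeping that the paper leaves implicit in the phrase ``directly obtained.''
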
 

Since the expressions of the rates are quite complex, instead of a general comparison of the rates, we compare the composite rate with some examples from Region I and Region II. For Example \ref{ex1} and \ref{ex2} given in Section \ref{scheme1}, we can check that $\frac{A-K}{A-\frac{1}{K_1}} < 0$; therefore, both examples fall into Region II, which can also be verified by Theorem \ref{thm2}. The optimal values of $\alpha'$ and $\beta$ for Region II are $\left(\frac{M_{1}}{M_{1}+M_{2} K_{2}}, 0\right)$. Since we can not take $M=0$ in the expression of $r\left(\frac{M}{N}, K\right)$ as given in \cite{maddah2014decentralized}, we compute the rate of the KNMD scheme by taking a small value of $\beta$, i.e., $\beta=0.01$ instead of taking $\beta=0$ in both the examples.
 
Figure \ref{knmd r1} shows the plot for $R_1$ with $M_1$ and $M_2$. Our scheme includes lines corresponding to different values of $t$, representing achievable rates within specific memory regions defined by Eq. \ref{global memory} with $\alpha = 0$ and $\alpha = 1$ as the extreme values. The variable $t$ ranges from 1 to $K$, and we have plotted lines from $1$ to $K-1$ in the figure.
Using memory sharing, as explained in Section \ref{memory sharing}, we achieve the regions between each consecutive line. Thus, Figure \ref{knmd r1} includes both the lines and the intermediate regions for our scheme, as well as the plane representing the KNMD scheme. Our scheme's intermediate regions achieve lower $R_1$ compared to the KNMD scheme's plane. At higher $t$ values, our lines or the regions between them intersect the KNMD scheme's plane at specific memory points, indicating competitive performance. Choosing lower $t$ values, or the regions around these lower values, typically provides a better trade-off between $R_1$, $M_1$, and $M_2$.

Figure \ref{knmd r2} shows the trade-off between $R_2$ with $M_1$ and $M_2$ for both the KNMD scheme and our proposed scheme. We observe that the lines corresponding to $t=1$ to $K-1$, as well as the regions between them, consistently fall below the plane representing the KNMD scheme. This indicates that our scheme achieves a reduced $R_2$ compared to the KNMD scheme in these memory regions.
Additionally, Figure \ref{knmd r1r2} explores the trade-off for $R_1 + R_2$ versus $M_1$ and $M_2$. Here, we can also observe that in regions where the plane for our scheme falls below the plane for the KNMD scheme, we achieve a lower combined rate $R_1 + R_2$ than the KNMD scheme.
From the comparison given in Figure \ref{fig:CRvsGM} and Figure \ref{R_1+R_2}, we observe that the trade-off for the composite rate and $R_1 + R_2$ with global memory in the proposed scheme is superior to that of the KNMD scheme for Example \ref{ex1}. Also, Table \ref{tab1} and Table \ref{tab2} illustrate the comparison of the rates for the same value of global memory for Example \ref{ex1} and Example \ref{ex2}, respectively. Now, the following example falls in Region I.

\begin{example}\label{ex3}
Consider $K_1=2, K_2=3$ and $N=6$. We have $A=8.4$ and $\frac{A-K}{A-\frac{1}{K_1}}=0.303$. Therefore, if we choose $\alpha<0.303$, it will fall into Region I. If we consider $\alpha=0.2$, by using the proposed scheme for $t=K_2=3$, we get
$$M_1=0.34, M_2=2.16, \overline{M}=13.64,$$
and 
$$R_1=1.6, R_2 = 1.32, \overline{R}=4.24.$$
Whereas using KNMD scheme for the memory point $(M_1, M_2)=(0.34, 2.16)$ at ${\alpha'}^{\star}=\beta^{\star}=\frac{M_1}{N}=0.0567$, we get 
$$R_1=1.56, R_2 = 1.312, \overline{R}=4.189.$$
Our scheme works for all $t\in [K]$. Figure \ref{fig:ex3} shows all the global memory-composite rate points obtained by the proposed scheme for $t=2$ and $t=3$ for this example. The point we obtained above is denoted by $S (13.64, 4.24)$. The points $A (12, 3.867)$ and $B (16.8, 2.55)$ are obtained by the proposed scheme for $t=2, \alpha=0$ and $t=3, \alpha =0$, respectively. Therefore a point on line $AB$ can be obtained corresponding to the global memory $13.64$ which is denoted by $T$ in Figure \ref{fig:ex3}. At point $T$, we have the following parameters,
$$M_1 = 0.105, M_2 = 2.245, \overline{M}=13.68,$$
and
$$R_1=1.134 , R_2=1.142 , \overline{R}=3.42.$$
It can be easily checked that the memory point $(M_1=0.105, M_2= 2.245)$ is in Region I, and for the same memory point, the KNMD scheme attains the following rates,
$$R_1=1.544, R_2=1.2626, \overline{R}=4.069.$$
Therefore, the proposed scheme provides a better rate for the given global memory $\overline{M}=13.68$. Clearly, we can have multiple memory points $(M_1, M_2)$ corresponding to the same value of the global memory $\overline{M}$.
\begin{figure*}[!t]
	\centering
		\includegraphics[width= 0.7\textwidth]{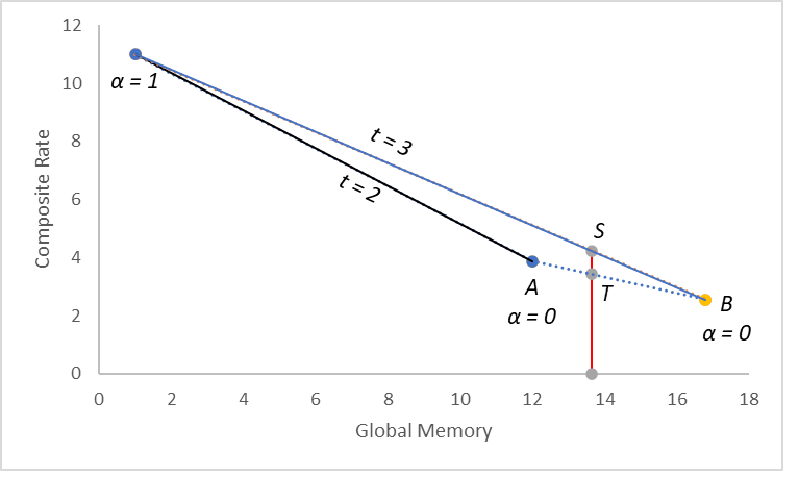}
		\caption{Global memory and composite rate for Example \ref{ex3}}
		\label{fig:ex3}	
\end{figure*} 
\end{example} 

\subsection{Comparison with the ZWXWL scheme\cite{7763243}}

In a two-layer caching network, applying the MN scheme separately in each layer leads to all requested files being retrieved from the server, ignoring user caches. To address this issue, ZWXWL scheme \cite{7763243} introduced a joint caching (JC) scheme that accounts for both layers' caches, reducing unnecessary data delivery from the MN scheme and further lowering the delivery rate. The delivery rates from the server and each mirror of the JC scheme are:
 
\begin{small}
\begin{align}
R_{1}^{J C}\left(M_{1}, M_{2}\right)=K_{1} K_{2}\left(1-\frac{M_{1}}{N}\right)\left(1-\frac{M_{2}}{N}\right) \frac{1}{1+K_{1} M_{1} / N},
\end{align}
\end{small}

\begin{small}
\begin{align}
R_{2}^{J C}\left(M_{2}\right)=K_{2}\left(1-\frac{M_{2}}{N}\right) \frac{1}{1+K_{2} M_{2} / N}.
\end{align}
\end{small}

Since this is a centralized scheme with normalized memory sizes $M_{1} \in$ $\left\{0, N / K_{1}, 2 N / K_{1},\cdots, \right.$ $ \left. N\right\}$ and $M_{2} \in$ $\left\{0, N / K_{2}, 2 N / K_{2}, \cdots, N\right\}$,  we utilized memory sharing to compute the transmission rates at these memory points for comparison with our own scheme.
Figure \ref{fig:CRvsGM} illustrates the trade-off between the composite rate and the global memory for Example \ref{ex1}. We used the global memory-composite rate points $(0, 12), (6, 8)$ and $(18, 4.5)$ which are achieved by the ZWXWL scheme for $(M_1=0, M_2=0), (M_1=2, M_2=0)$ and $(M_1=0, M_2=3)$, respectively, in the plot given in Figure \ref{fig:CRvsGM} for Example \ref{ex1}.

To compute the composite rate of the ZWXWL scheme for the global memory $\overline{M}=3.2$ in Example \ref{ex2}, we used the global memory-composite rate points $(3, 8)$ and $(9, 4.5)$ which are achieved by the scheme for $(M_1=1, M_2=0)$ and $(M_1=0, M_2=1.5)$, respectively. For the same global memory, the comparison of rates of the proposed scheme and the ZWXWL scheme is given in Table \ref{tab1} and Table \ref{tab2} for the Example \ref{ex1} and Example \ref{ex2}, respectively.

\subsection{Comparison with the ZWXWLL scheme\cite{8481555}}
The ZWXWLL scheme\cite{8481555} demonstrates that it achieves a reduction in the transmission rate of the first layer compared to the generalized caching scheme in the KNMD scheme while maintaining the transmission rate of the second layer. The transmission rates for the ZWXWLL scheme are given as follows:

\begin{small}
\begin{align}
  R_{1}^{H}(\alpha'', \beta)= & \alpha'' K_{2}\left(1-\frac{M_{1}}{\alpha'' N}\right)\left(1-\frac{\beta M_{2}}{\alpha'' N}\right) \frac{\alpha'' N}{M_{1}}\nonumber \left(1-\left(1-\frac{M_{1}}{\alpha'' N}\right)^{K_{1}}\right) \\\nonumber
  & +(1-\alpha'')\left(1-\frac{(1-\beta) M_{2}}{(1-\alpha'') N}\right) \frac{(1-\alpha'') N}{(1-\beta) M_{2}} \left(1-\left(1-\frac{(1-\beta) M_{2}}{(1-\alpha'') N}\right)^{K_{1} K_{2}}\right),
\end{align}
\end{small}

\begin{small}
\begin{align}
 R_{2}^{H}(\alpha'', \beta)= & \alpha''\left(1-\frac{\beta M_{2}}{\alpha'' N}\right) \frac{\alpha'' N}{\beta M_{2}}\left(1-\left(1-\frac{\beta M_{2}}{\alpha'' N}\right)^{K_{2}}\right) \nonumber\\
 & +(1-\alpha'')\left(1-\frac{(1-\beta) M_{2}}{(1-\alpha'') N}\right) \frac{(1-\alpha'') N}{(1-\beta) M_{2}} \left(1-\left(1-\frac{(1-\beta) M_{2}}{(1-\alpha'') N}\right)^{K_{2}}\right).
\end{align}
\end{small}

In the ZWXWLL scheme\cite{8481555}, the authors introduced two regimes: Regime I) $M_{1}+K_{2} M_{2} \geq N$ and Regime II) $M_{1}+K_{2} M_{2} < N$. Our approach resides within both of these regions. In the ZWXWLL scheme, multiple tuples of $\alpha''$ and $\beta$ are given for each region, selecting the one with the best rate. We evaluated all possible tuples in both regions and selected the one that yielded the lowest transmission rate for comparison with our scheme. 

For first regime, they considered tuples of $(\alpha'', \beta)$ as follows,
\begin{equation}
(\alpha'', \beta)=
\begin{cases}\left(\frac{M_{1}}{N}, \frac{M_{1}}{N}\right), & \text { Tuple I, } \\ \left(\frac{M_{1}}{M_{1}+K_{2} M_{2}}, 0\right), & \text { Tuple II, } \\ (1,1), & \text { Tuple III. }\end{cases}
\label{zwxwloptimal}
\end{equation}

In the second regime, the considered tuples of $(\alpha'', \beta)$ as follows,
$$
(\alpha'', \beta)= \begin{cases}\left(\frac{M_{1}}{N}, \frac{M_{1}}{N}\right), & \text { Tuple I, } \\ \left(\frac{M_{1}}{N}, \frac{1}{2}\right), & \text { Tuple II. } \end{cases}
$$
Given the resemblance between the memory region of the ZWXWLL scheme and the KNMD scheme, the comparison given in Figure \ref{fig:CRvsGM} follows a similar pattern to that for the KNMD scheme. It can be checked that Example \ref{ex1} and Example \ref{ex2} fall in the second regime given above, and the rate comparison for both the examples are given in Table \ref{tab1} and Table \ref{tab2}, respectively. Now we compute the composite rate of the ZWXWLL scheme for Example \ref{ex3} ($K_1=2, K_2=3, N=6, \overline{M}=13.68$), which falls in the first regime, as follows.
$$R_1=1.5445 , R_2=1.2626 , \overline{R}=4.0697.$$
Clearly, our proposed scheme achieves the better composite rate ($\overline{R}=3.42$)  for this example. 

In Figure \ref{zwxwll r1}, the lines corresponding to different values of $t$ and the memory regions between them, achieved through memory sharing, form a plane representing our scheme. Similarly, the ZWXWLL scheme is depicted as a plane. For certain values of $t$, we observe that the lines and the corresponding regions fall below the plane of the ZWXWLL scheme, indicating memory regions where our scheme achieves a reduced rate for the first layer compared to the ZWXWLL scheme. Additionally, the ZWXWLL scheme exhibits abrupt changes in the rate of the first layer at certain memory points, resulting in sudden spikes in the plane representing the ZWXWLL scheme.

Similarly, akin to the comparison analysis conducted with the KNMD scheme, we depict the trade-off for $R_2$ with $M_1$ and $M_2$, and $R_1 + R_2$ with $M_1$ and $M_2$ in Figure \ref{zwxwll r2} and Figure \ref{zwxwll r1r2}, respectively. In Figure \ref{zwxwll r2}, we observe that the lines corresponding to all values of $t$, and the plane between these lines, fall below the plane of the ZWXWLL scheme within the memory region where our scheme is applicable. The trade-off for $R_1 + R_2$ with $M_1$ and $M_2$ in Figure \ref{zwxwll r1r2} shows that the plane for our scheme lies below the plane for the ZWXWLL scheme, indicating a reduction in $R_1 + R_2$ at these memory points.

\subsection{Comparison with the WWCY scheme\cite{wang2019reduce}}
The WWCY scheme, introduced in \cite{wang2019reduce}, has a lower rate of the first layer compared to the KNMD and ZWXWLL schemes, while the second layer rate remains the same. The scheme defines two regimes: Regime I ($M_{1}+K_{2} M_{2} \geq N$) and Regime II ($M_{1}+K_{2} M_{2}<N$). In the WWCY scheme, the considered tuples of $(\alpha''', \beta)$ are as follows: $$
(\alpha''', \beta)= \begin{cases}\left(\frac{M_{1}}{N}, \frac{M_{1}}{N}\right), & \text { Tuple I, } \\ \left(\frac{M_{1}}{M_{1}+K_{2} M_{2}}, 0\right) & \text { Tuple II. } \end{cases}
$$
The rates of the first and second layers for the WWCY scheme can be expressed as follows:
\begin{small}
 
 \begin{align}
R_{1}(\alpha''', \beta) \triangleq   \alpha'''  \cdot r\left(\frac{M_{1}}{\alpha''' N}, K_{1}\right) r\left(\frac{\beta M_{2}}{\alpha''' N}, K_{2}\right) 
+(1-\alpha''') \cdot r\left(\frac{(1-\beta) M_{2}}{(1-\alpha''') N}, K_{1} K_{2}\right),
\label{wwcyrate1}
 \end{align}

\end{small}

\begin{small}
 \begin{align}
R_{2}(\alpha''', \beta) \triangleq  \alpha''' \cdot r\left(\frac{\beta M_{2}}{\alpha''' N}, K_{2}\right) 
+(1-\alpha''')  \cdot  r\left(\frac{(1-\beta) M_{2}}{(1-\alpha''') N}, K_{2}\right).
\label{wwcyrate2}
 \end{align}
\end{small}

The WWCY scheme also has the same memory regimes as the ZWXWLL scheme, and Example \ref{ex1} and Example \ref{ex2} fall in Regime II while Example \ref{ex3} falls in Regime I. For the optimal values of $(\alpha''', \beta)$ for Regime II, the tuple  $\left(\frac{M_{1}}{M_{1}+M_{2} K_{2}}, 0\right)$ is considered by the WWCY scheme. Again, we computed the rate of the WWCY scheme by taking $\beta=0.01$ instead of taking $\beta=0$ in Example \ref{ex1} and Example \ref{ex2}.
Figure \ref{fig:CRvsGM} illustrates the trade-off between the composite rate and global memory for Example \ref{ex1}, and the comparison of rates of the proposed scheme and the WWCY scheme for the same global memory is given in Table \ref{tab1} and Table \ref{tab2} for the Example \ref{ex1} and Example \ref{ex2}, respectively.
Now we compute the composite rate of the WWCY scheme for Example \ref{ex3} ($K_1=2, K_2=3, N=6, \overline{M}=13.68$),
$$R_1=1.5445 , R_2=1.2626 , \overline{R}=4.0697,$$
which is the same as the rate of the ZWXWLL scheme, and our proposed scheme achieves the better composite rate ($\overline{R}=3.42$). 

Figure \ref{wwcy r1} illustrates the trade-off between $R_1$ with $M_1$ and $M_2$ for both the WWCY scheme and our scheme. Our scheme consistently achieves a better rate, as indicated by the memory region where the lines and the corresponding plane for our scheme lie below the plane of the WWCY scheme. With increasing values of $t$, we observe improved performance only at specific memory points. \\
The trade-off for $R_2$ with $M_1$ and $M_2$ is depicted in Figure \ref{wwcy r2} for both our scheme and the WWCY scheme. Across all values of $t$, our scheme consistently positions its plane below the plane defined by the WWCY scheme, resulting in reduced $R_2$.

 Figure \ref{wwcy r1r2} shows the trade-off for $R_1 + R_2$ with $M_1$ and $M_2$ for both the WWCY scheme and our scheme. We observe that for lower values of $t$, such as $t = 1$ and $t = 2$, our scheme's plane falls below that of the WWCY scheme, thus achieving a lower $R_1 + R_2$. At $t = 3$, our scheme performs better than the WWCY scheme at some memory points but not consistently across all memory regions. Therefore, selecting lower values of $t$ generally results in a reduced $R_1 + R_2$.

\subsection{Comparison with the KWC scheme\cite{kong2022centralized}}
The KWC scheme introduced a hierarchical placement delivery array (HPDA) based approach. In this scheme, the rate for the first layer is superior to that of the KNMD and WWCY schemes when the parameters $(\alpha', \beta)$ in the KNMD scheme and $(\alpha''', \beta)$ in the WWCY scheme are set to $(1, 1)$.
The rates for the first and second layers in the hierarchical placement delivery array (HPDA) based approach of the KWC scheme are given by the following equations
\begin{small}
 \begin{align}
R_1=\frac{K_{1}K_{2}-t}{t+1},
\label{kwcrat}
 \end{align}
\end{small}
\begin{small}
 \begin{align}
R_2=\frac{K_{1}K_{2}-t}{t+1} -\frac{{{K-K_{2}} \choose t+1}}{{K \choose t}}+\frac{{{K-K_{2}} \choose t-K_{2}}K_{2}}{{K \choose t}},
 \end{align}
\end{small}
where $t = \frac{{K_1K_2 (M_1 + M_2)}}{N}$.
As shown in Subsection \ref{alpha0}, the KWC scheme can be obtained by the proposed scheme by fixing $\alpha=0$ and $K_2 < t < K$. 

\section{The single mirror case} \label{scheme2}
In this section, we propose an alternative scheme for the single mirror case in which all the coded placements are stored in users' caches. 
Also, this scheme works for $0 \leq M_2 \leq \frac{1}{K}$ and for any $M_1$ such that $0 \leq M_1 \leq (1-KM_2)N$.

Consider a hierarchical system with one mirror, and all $K$ users are connected to that mirror, i.e., $K_1=1$ and $K=K_2$. The mirror is equipped with the cache memory of size $M_1$ files, and each user is equipped with the cache memory of size $M_2$ files.
We divide each file $W_n$ of size $F$ bits into two subfiles $W_n^1$ and $W_n^2$ of size $F_1$ and $F_2$ bits, respectively, where $F_1=\alpha F$ and $F_2 = (1-\alpha) F$ for some $0 \leq \alpha \leq 1$, i.e.,
$$W_n \rightarrow W^1_n \quad \text{and} \quad W^2_n, \quad \forall n \in N.$$
We further divide $W_n^1$ into $K$ mini-subfiles, say $W^1_{n,1}, W^1_{n,2}, \ldots, W^1_{n,K}$ for all $n \in [N]$. The size of $W^1_{n,i}$ is $\frac{\alpha F}{K}$ for all $n\in [N], i \in [K]$. This scheme works for the following values of $M_1$ and $M_2$,
$$0 \leq M_1 \leq (1-\alpha) N \quad \text{and} \quad M_2=\frac{\alpha}{K},$$
where $\alpha \in [0,1]$.
\subsection{Placement phase}
Let the cache content of the mirror be denoted by $\Lambda$, and the cache content of user $k$ be denoted by $Z_k$ for $k \in [K]$. 
Then,
\begin{align*}
    \Lambda &= \{ W^{2,\theta}_n \ | \ \forall n\in [N] \}  \quad \text{and}
    \quad
    Z_k = \{ W^1_{1,k} \oplus W^1_{2,k} \oplus \cdots \oplus W^1_{N,k} \},
\end{align*}
where $W_{n}^{2, \theta}$ denotes the $\frac{M_1}{N(1-\alpha)}$ part of subfile $W_n^2$.
Clearly, we have
$$M_1=\frac{N \left(\frac{M_1}{(1-\alpha) N} (1-\alpha) F\right)}{F} \quad \text{and} \quad M_2=\frac{\alpha}{K}.$$
The global memory of the system is
$\overline{M}=M_1+K M_2=M_1+\alpha,$ where $0 \leq M_1 \leq (1-\alpha)^2 N$.

\subsection{Delivery phase}
 Let the demand vector be $\overline{d}=(d_1,d_2, \ldots, d_K)$ and each file is demanded by at least one user.  Consider a base set $\mathcal{B} \subseteq [K]$ such that $|\mathcal{B}|=N$ and $ \{d_k \ | \ k \in \mathcal{B}\} = [N]$.
Then transmissions are as follows:

\noindent \textbf{Server to mirrors:}
The following transmissions are from the server to the mirror.
\begin{enumerate}[label=(\subscript{\textbf{SM}}{{\arabic*}})]
\item For each $k\in [K]$, transmit mini-subfiles $W^1_{j,k}$, for $j\in [N]$ and $j \neq d_k$.
\item For each $k \in [K] \backslash \mathcal{B}$, transmit $W^1_{d_k, k} \oplus W^1_{d_{k},k'}$, where $k' \in \mathcal{B}$ and $d_k=d_{k'}$.
\item For each $n \in [N]$, transmit $W_{n}^{2, \not\theta}$, where $W_{n}^{2, \not\theta}$ denoted the remaining $\left( 1-\frac{M_1}{N(1-\alpha)} \right)$ part of subfile $W_n^2$. 
\end{enumerate}

\noindent \textbf{Mirror to users:}
The following transmissions are from the mirror to the users attached to it.
\begin{enumerate}[label=(\subscript{\textbf{MU}}{{\arabic*}})]
\item For each $k\in [K]$, transmit mini-subfiles $W^1_{j,k}$, for $j\in [N]$ and $j \neq d_k$.
\item For each $k \in [K] \backslash \mathcal{B}$, transmit $W^1_{d_k, k} \oplus W^1_{d_{k},k'}$, where $k' \in \mathcal{B}$ and $d_k=d_{k'}$.
\item For each $n \in [N]$, transmit $W_{n}^{2}$.
\end{enumerate}
\subsection{Proof of correctness}
\subsubsection {Mirror gets from the server what it transmits to users:}
In this part, we prove that after receiving the transmissions from the server, the mirror will get all the files that it needs to transmit to the users in the delivery phase.

Clearly, for steps $(\textbf{MU}_1)$ and $(\textbf{MU}_2)$, the mirror gets the required files directly from steps $(\textbf{SM}_1)$ and $(\textbf{SM}_2)$.
Since for each file $W_n$, $n \in [N]$, the mirror gets $\frac{M_1}{N(1-\alpha)}$ part from its cache and remaining $\left( 1-\frac{M_1}{N(1-\alpha)} \right)$ part from the transmission of step $(\textbf{SM}_3)$. Therefore, the mirror has all the required files for step $(\textbf{MU}_3)$.  \\

\subsubsection {All users get their desired files:}
In this part, we prove that after receiving the transmissions from the mirror, each user will get the demanded file in the delivery phase.
Let the set $D^{(i)}$ contains all the users with demand $W_i$ for all $i \in [N]$, i.e., 
$$D^{(i)}=\{k \in [K] \ | \ d_k=i\}.$$
Consider a user $\lambda \in [K]$. The demand of user $\lambda$ is $W_{d_{\lambda}}$. 
From step $(\textbf{MU}_1)$, the user receives $W^1_{i,\lambda}$ for all $i \in [N]$ and $i \neq d_{\lambda}$. Hence using the cache content $Z_{\lambda}=\{ W^1_{1,\lambda} \oplus W^1_{2,\lambda} \oplus \cdots \oplus W^1_{N,\lambda}\}$ of user $\lambda$, we get $W^1_{d_{\lambda},\lambda}$. Also, from step $(\textbf{MU}_1)$, the user gets $W^1_{d_{\lambda}, j}$ for all $j \in [K]$ such that $d_j \neq d_{\lambda}$. Now, the following mini-subfiles are left to be obtained
$$W^1_{d_{\lambda}, j}, \ \forall j \in D^{(d_{\lambda})} \backslash \{\lambda\}.$$
There are following two cases:
\begin{enumerate}
\item If $\lambda \in \mathcal{B}$, then $D^{(d_{\lambda})} \backslash \{\lambda\} \subseteq  [K]\backslash \mathcal{B}$. Hence from step $(\textbf{MU}_2)$, the user $\lambda$ receives 
$$W^1_{d_{j}, j} \oplus W^1_{d_{j}, \lambda} = W^1_{d_{\lambda}, j} \oplus W^1_{d_{\lambda}, \lambda},$$
 for all $j \in D^{(d_{\lambda})} \backslash \{\lambda\}$. Since the user already has $W^1_{d_{\lambda}, \lambda}$, it will get $W^1_{d_{\lambda}, j}$. 
\item  If $\lambda \not\in \mathcal{B}$, then there exist $\lambda' \in D^{(d_{\lambda})}$ such that $\lambda' \in \mathcal{B}$. Since $\lambda \in [K] \backslash \mathcal{B}$, from step $(\textbf{MU}_2)$, the user receives 
$W^1_{d_{\lambda}, \lambda} \oplus W^1_{d_{\lambda}, \lambda'}$. The user $\lambda$ will get $W^1_{d_{\lambda}, \lambda'}$ as it already has $W^1_{d_{\lambda}, \lambda}$. Again from step $(\textbf{MU}_2)$, the user receives 
$$W^1_{d_{j}, j} \oplus W^1_{d_{j}, \lambda'} = W^1_{d_{\lambda}, j} \oplus W^1_{d_{\lambda}, \lambda'},$$
 for all $j \in D^{(d_{\lambda})} \backslash \{\lambda, \lambda'\}$. Since now the user $\lambda$ already has $W^1_{d_{\lambda}, \lambda'}$, it will get $W^1_{d_{\lambda}, j}$. \\
\end{enumerate}

\subsection{Rate}
\noindent \textbf{Server to mirrors:}
As per the delivery phase, the following rate is calculated step-wise.
\begin{enumerate}
\item The total number of mini-subfiles of size $F_1$ transmitted in step $(\textbf{SM}_1)$ and $(\textbf{SM}_2)$ are $K(N-1) + K-N = N(K-1)$.
\item The size of the subfiles transmitted in step $(\textbf{SM}_3)$ is $N \left( 1-\frac{M_1}{N(1-\alpha)} \right) (1-\alpha) F$.
\end{enumerate}
Therefore, we have the rate, 
\begin{align*}
R_1 &= \alpha \frac{N(K-1)}{K}+N \left( 1-\frac{M_1}{N(1-\alpha)} \right) (1-\alpha) \\
&= \alpha \frac{N(K-1)}{K}+ (1-\alpha) N-M_1 \\
&= N \left( 1- \frac{\alpha}{K} \right)- M_1.
\end{align*}

\noindent \textbf{Mirror to users:}
As per the delivery phase, the following rate is calculated step-wise.
\begin{enumerate}
\item The total number of mini-subfiles of size $F_1$ transmitted in step $(\textbf{MU}_1)$ and $(\textbf{MU}_2)$ are $K(N-1) + K-N = N(K-1)$.
\item The size of the subfiles transmitted in step $(\textbf{MU}_3)$ is $N (1-\alpha) F$.
\end{enumerate}
Therefore, we have the rate, 
$$
R_2 = \alpha \frac{N(K-1)}{K}+N  (1-\alpha) = N \left( 1- \frac{\alpha}{K} \right).
$$
The composite rate of the system is $$\overline{R}=R_1+R_2=2N \left( 1- \frac{\alpha}{K} \right)- M_1.$$

Again, we consider two examples, one for the case $N=K$ and the other for the case $N < K$, given as follows.
\begin{example} \label{ex4}
Let $K_1=1, K_2=4, N=4,$ $\alpha = \frac{1}{2}$. Therefore, we have, 
$$0 \leq M_1 \leq \frac{N}{2}=2 ,  \quad M_2=\frac{\alpha}{K} = \frac{1}{8}.$$
Divide each file $W_n$ into two subfiles $W^1_n$ and $W^2_n$, for all $n \in [4]$. Further, divide $W^1_n$ into $4$ mini-subfiles, say, $W^1_{n,1}, W^1_{n,2}, \ldots, W^1_{n,4}$. 

\noindent \textbf{Placement phase:} The cache contents of the mirror is,
$$\Lambda=\{ W^{2,\theta}_n \ | \ \forall n\in [4] \},$$
where $W_{n}^{2, \theta}$ denoted the $\frac{M_1}{2}$ part of subfile $W_n^2$.
The cache contents of the users are as follows,
\begin{align*}
Z_1 &=\{ W^1_{1,1} \oplus W^1_{2,1} \oplus W^1_{3,1} \oplus W^1_{4,1}\},\\
Z_2 &=\{ W^1_{1,2} \oplus W^1_{2,2} \oplus W^1_{3,2} \oplus W^1_{4,2}\},\\
Z_3 &=\{ W^1_{1,3} \oplus W^1_{2,3} \oplus W^1_{3,3} \oplus W^1_{4,3}\},\\
Z_4 &=\{ W^1_{1,4} \oplus W^1_{2,4} \oplus W^1_{3,4} \oplus W^1_{4,4}\}.
\end{align*}

\noindent \textbf{Delivery phase:} Let the demand vector be $(1,2,3,4)$.

\noindent \textbf{Server to mirrors:} 
\begin{itemize}
\item The server transmits the following files to the mirror:\\
$ W^1_{1,2}, W^1_{1,3}, W^1_{1,4}, 
W^1_{2,1},  W^1_{2,3},  W^1_{2,4}, 
W^1_{3,1},  W^1_{3,2},  W^1_{3,4}, 
W^1_{4,1},  W^1_{4,2},  W^1_{4,3}$.
\item   The server also transmits $W_{n}^{2, \not\theta}$, for each $n \in [4]$, where $W_{n}^{2, \not\theta}$ denoted the remaining $\left( 1-\frac{M_1}{2} \right)$ part of subfile $W_n^2$. 
\end{itemize}
Therefore, we have the rate 
$R_1=\frac{7}{2}-M_1.$
\\
\noindent \textbf{Mirrors to users:}
\begin{itemize}
\item The mirror will transmit the following files:	

$W^1_{1,2}, W^1_{1,3}, W^1_{1,4}, 
W^1_{2,1},  W^1_{2,3},  W^1_{2,4}, 
W^1_{3,1},  W^1_{3,2},    W^1_{3,4}, 
W^1_{4,1},  W^1_{4,2},  W^1_{4,3}$.
\end{itemize}
\begin{itemize}
\item  The mirror also transmits $W_{n}^{2}$, for each $n \in [4]$.
\end{itemize}
Therefore, we have the rate 
$$R_2= \frac{7}{2}.$$
 In this example, the global cache memory is $\overline{M}=M_1+\frac{1}{2}$ and the composite rate is $\overline{R}=7-M_1$. For the same value of $N, K_1$ and $K_2$, the scheme given in section \ref{scheme1} gives the composite rate $\overline{R}=\frac{11}{2}=5.5$ for the global memory $\overline{M}=\frac{5}{2}$.
 For comparison, take $M_1=2$, then the alternate scheme gives the composite rate $\overline{R}=5$ for the global memory $\overline{M}=\frac{5}{2}$.
\end{example}

\begin{example}\label{ex5}
Let $K_1=1, K_2=6, N=4,$ $\alpha = \frac{2}{3}$. Therefore, we have, 
$$0 \leq M_1 \leq \frac{4}{3} ,  \quad M_2=\frac{\alpha}{K} = \frac{1}{9}.$$
Divide each file $W_n$ into two subfiles $W^1_n$ and $W^2_n$, for all $n \in [4]$. Further, divide $W^1_n$ into $6$ mini-subfiles, say, $W^1_{n,1}, W^1_{n,2}, \ldots, W^1_{n,6}$. 

\noindent \textbf{Placement phase:} The cache contents of the mirror is,
$$\Lambda=\{ W^{2,\theta}_n \ | \ \forall n\in [4] \},$$
where $W_{n}^{2, \theta}$ denoted the $\frac{3M_1}{4}$ part of subfile $W_n^2$.
The cache contents of the users are as follows,
\begin{align*}
Z_1 &=\{ W^1_{1,1} \oplus W^1_{2,1} \oplus W^1_{3,1} \oplus W^1_{4,1}\},\\
Z_2 &=\{ W^1_{1,2} \oplus W^1_{2,2} \oplus W^1_{3,2} \oplus W^1_{4,2}\},\\
Z_3 &=\{ W^1_{1,3} \oplus W^1_{2,3} \oplus W^1_{3,3} \oplus W^1_{4,3}\},\\
Z_4 &=\{ W^1_{1,4} \oplus W^1_{2,4} \oplus W^1_{3,4} \oplus W^1_{4,4}\},\\
Z_5 &=\{ W^1_{1,5} \oplus W^1_{2,5} \oplus W^1_{3,5} \oplus W^1_{4,5}\},\\
Z_6 &=\{ W^1_{1,6} \oplus W^1_{2,6} \oplus W^1_{3,6} \oplus W^1_{4,6}\}.
\end{align*}

\noindent \textbf{Delivery phase:} Let the demand vector be $(1,2,2,3,1,4)$ and $\mathcal{B}=\{1,2,4,6\}$. 
\\
\noindent \textbf{Server to mirrors:} 
\begin{itemize}
\item The server transmits the following files to the mirror:

$W^1_{1,2},W^1_{1,1}\oplus W^1_{1,5}, W^1_{1,3}, W^1_{1,4},  W^1_{1,6},W^1_{2,1}, W^1_{2,2} \oplus W^1_{2,3}, W^1_{2,4}, W^1_{2,5}, W^1_{2,6} ,W^1_{3,1}, W^1_{3,2},  W^1_{3,3}, \\ W^1_{3,5},  W^1_{3,6}, W^1_{4,1}, W^1_{4,2},  W^1_{4,3}, W^1_{4,4}, W^1_{4,5}$

\item   The server also transmits $W_{n}^{2, \not\theta}$, for each $n \in [4]$, where $W_{n}^{2, \not\theta}$ denoted the remaining $\left( 1-\frac{3M_1}{4} \right)$ part of subfile $W_n^2$. 
\end{itemize}
Therefore, we have the rate 
$$R_1=\frac{32}{9}-M_1.$$
\noindent \textbf{Mirrors to users:} 	
\begin{itemize}

\item The mirror will transmit the following files: 
% $ W^1_{1,2}, W^1_{1,3}, W^1_{1,4},  W^1_{1,1} \oplus W^1_{1,5},   W^1_{1,6},
% W^1_{2,1},   W^1_{2,2} \oplus W^1_{2,3},\\  W^1_{2,4},  W^1_{2,5},  W^1_{2,6} ,
% W^1_{3,1},  W^1_{3,2},  W^1_{3,3},  W^1_{3,5},  W^1_{3,6},
% W^1_{4,1},\\  W^1_{4,2},   W^1_{4,3}, W^1_{4,4}, W^1_{4,5} $.
\[
\begin{aligned}
& W^1_{1,2}, \, W^1_{1,3}, \, W^1_{1,4}, \, W^1_{1,1} \oplus W^1_{1,5}, \, W^1_{1,6}, \,
W^1_{2,1}, \, W^1_{2,2} \oplus W^1_{2,3}, \, W^1_{2,4}, \, W^1_{2,5}, \, W^1_{2,6}, \\
& W^1_{3,1}, \, W^1_{3,2}, \, W^1_{3,3}, \, W^1_{3,5}, \, W^1_{3,6}, \,
W^1_{4,1}, \, W^1_{4,2}, \, W^1_{4,3}, \, W^1_{4,4}, \, W^1_{4,5}.
\end{aligned}
\]

\item  The mirror also transmits $W_{n}^{2}$, for each $n \in [4]$.
\end{itemize}
Therefore, we have the rate 
$$R_2= \frac{32}{9}.$$

 For the same value of $N, K_1$ and $K_2$, the scheme given in Section \ref{scheme1} gives the composite rate $\overline{R}=\frac{56}{9}=6.22$ for the global memory $\overline{M}=2$.
In this example, the global cache memory is $\overline{M}=M_1+\frac{2}{3}$ and the composite rate is $\overline{R}=\frac{64}{9}-M_1$. For comparison, take $M_1=\frac{4}{3}$, then the alternate scheme gives the composite rate $\overline{R}=\frac{52}{9}=5.78$ for the global memory $\overline{M}=2$.
\end{example}

Now, we show that this scheme is better than the scheme given in Section \ref{scheme1} for the single mirror case.
For $K_1=1$, we have the following parameters from the scheme given in Section \ref{scheme1},
$$\overline{M}=\alpha+(1-\alpha) N \quad \text{and} \quad R=\alpha N \left( 1- \frac{1}{K}\right) + N .$$
The parameters of the alternative scheme presented in this section are as follows.
$$\overline{M'}=M_1+\alpha \quad \text{and} \quad R'=2N \left( 1-\frac{\alpha}{K} \right) - M_1.$$
For the comparison, take $M_1=(1-\alpha)N$, and we have
$$\overline{M'}=\alpha+(1-\alpha)  N  \quad \text{and} \quad R'=2N \left( 1-\frac{\alpha}{K} \right) - (1-\alpha)N.$$
Now, we have
\begin{align*}
R-R' &= \alpha N \left( 1- \frac{1}{K}\right) + N - 2N \left( 1-\frac{\alpha}{K} \right) - (1-\alpha)N  = \frac{\alpha N}{K} >0.
\end{align*}
Therefore, this scheme performs better than scheme given in Section \ref{scheme1} for the hierarchical system when there is only one mirror.

\section{Comparison with the LZX scheme} \label{scheme2compare}
The LZX scheme given in \cite{liu2021intension} considers the two-layered hierarchical network with a single mirror with two users and is shown to be optimal in an average sense. In \cite{liu2021intension}, the scheme for the case $2M_2 \leq N$ is given in detail, and then the general idea is provided for the case $2M_2 \geq N$. For $2M_2 \leq N$, the rates of the LZX scheme are given in the following cases depending on the size of $M_1$. Considering $a=\frac{M_2}{N}$ and $b=1-\frac{2M_2}{N}$,

\begin{enumerate}
\item If $M_1 \leq Nb$,
$$R_1=\frac{1}{N^2} \left((2N-1)(N-M_1)-(3N-2)M_2\right),$$
$$R_2=\frac{1}{N^2} \left((2N-1)N-(3N-2)M_2\right).$$
\item If $Nb < M_1 \leq Nb+Na$,
$$R_1=1-\frac{M_1+M_2}{N},$$
$$R_2=\frac{1}{N^2} \left(N(N-M_2)+(N-1)M_1\right).$$
\item If $Nb+Na < M_1 < Nb+(2N-1)a$,
$$R_1=0, \ R_2=\frac{1}{N^2} \left((3N-1)(N-M_2)-NM_1\right).$$
\item If $Nb+(2N-1)a \leq M_1$,
$$R_1=0, \ R_2=\frac{1}{N^2} \left(N(2N-1)-(3N-2)M_2\right).$$
\end{enumerate} 
 In a similar setup of a single mirror with two users, our scheme for the single mirror case achieves the lowest rate as the LZX scheme only for the case when $\alpha=1$. If $\alpha=1$, then the parameters of our scheme for the single mirror case are $M_2=\frac{1}{2}, M_1=0, R_1=1, R_2=1$ and the composite rate $\overline{R}=2$. The LZX scheme achieves the same composite rate $\overline{R}=2$ for Memory $M_2=\frac{1}{2}$ and $M_1=0$. For other memory points $(M_1, M_2)$, the LZX scheme performs better than our scheme. However, the LZX scheme works only for two users, whereas our scheme for the single mirror case works for any number of users. 

\section{Conclusion}\label{conclu}

In this study, we deal with the two-layer hierarchical coded caching problem by introducing a coded placement scheme. Our scheme significantly lowers transmission rates when compared to existing approaches, and for the comparison, we consider two parameters: global memory and composite rate. Our scheme is designed to accommodate scenarios where the number of users is equal to or greater than the number of files. Also, we gave an alternate scheme for the special case of one mirror, which improved the rate of the existing scheme for one mirror case. Our scheme is not achieving the optimal rate for the case of one mirror and two users, which is achieved by the LZX scheme. However, the LZX scheme is not applicable to multiple users. Therefore, finding another scheme for the case of one mirror and multiple users with a better rate is an interesting problem to consider.

\section*{Acknowledgement}
This work was supported partly by the Science and Engineering Research Board (SERB) of the Department of Science and Technology (DST), Government of India, through J.C Bose National Fellowship to Prof. B. Sundar Rajan, by the Ministry of Human Resource Development (MHRD), Government of India, through Prime Minister’s Research Fellowship (PMRF) to Rajlaxmi Pandey, and through Indian Institute of Science (IISc) CV Raman Postdoc Fellowship to Charul Rajput.

\section*{Appendix A}

\subsection{Proof of Theorem \ref{thm_region}}
\begin{proof}
The KNMD scheme \cite{karamchandani2016hierarchical} gives three regions of memory on the basis of $M_{1}+M_{2} K_{2} \gtreqqless N$. For $N=K$, this can be written as $
M_1+M_2 K_2  \gtreqqless  K.$
We calculate $M_{1}+M_{2} K_{2}$ for $t=K_{2}$ for our scheme as:
\begin{align*}
 M_1+M_2 K_2 = \frac{\alpha}{K_1} + (1-\alpha) \frac{K}{{K \choose K_2}} + \frac{(1-\alpha)}{{K \choose K_2}} \\ \left({{K-1} \choose {K_2-1}} - 1\right) K K_2 
= \alpha\left(\frac{1}{K_1}-A\right)  + A, 
\end{align*} 
where 
\begin{align*}
    A=\frac{K}{\left(\begin{array}{l}
K \\
K_2
\end{array}\right)}+\left[\frac{\left(\begin{array}{l}
K-1 \\
K_2-1
\end{array}\right)-1}{\left(\begin{array}{l}
K \\
K_2
\end{array}\right)}\right]K K_2.
\end{align*}
Simplifying this further, we get 
$
   A= K_2^2- \frac{K(K_2-1)}{{K \choose {K_2}}}.
$
Therefore, we have
\begin{align*}
M_1+M_2K_2 \geq K \ &\iff \ \alpha \leq \frac{A-K}{A-\frac{1}{K_1}}, \\
M_1+M_2K_2 < K \ &\iff \ \alpha > \frac{A-K}{A-\frac{1}{K_1}}. 
\end{align*}
Now, we check condition for Region I and III separately as follows: \\
For Region I:
\begin{align*}
& 0 \leq M_1 \leq N / 4 \\
 & 0 \leq \frac{\alpha}{K_1}+(1-\alpha) \frac{K}{\left(\begin{array}{c}
K \\
K_2\end{array}\right)} \leq K / 4.
\end{align*}
After simplifying it, we get
\begin{align*}
 \alpha \leq  \frac{KK_{1}}{4}  \left[\frac{\left(\begin{array}{l}
K \\
K_2
\end{array}\right)-4}{\left(\begin{array}{l}
K \\
K_2
\end{array}\right)-KK_{1}}\right].
\end{align*}

Let $B = \frac{KK_{1}}{4} \left[\frac{{K \choose K_2}-4}{{K \choose K_2}-KK_{1}}\right]$.
Therefore, we can redefine regions of the KNMD schemes for the proposed scheme as follows:

I) $\alpha \leq  \min{ \left(\frac{A-K}{A-\frac{1}{K_1}}, B\right)}$,

II) $\alpha \geq \frac{A-K}{A-\frac{1}{K_1}}$,

III) $B \leq \alpha \leq   \frac{A-K}{A-\frac{1}{K_1}}$.

\end{proof}

\subsection{Proof of Theorem \ref{thm2}}
The following lemma is used in the proof of Theorem \ref{thm2}.
\begin{lem}\label{lem1}
For integers $K_1 \geq 2$ and $K_2 \geq 2$, except $K_1=K_2=2$, we have ${K \choose K_2} > K K_2$, where $K=K_1K_2$.
\end{lem}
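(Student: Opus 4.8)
The plan is to prove the inequality $\binom{K_1K_2}{K_2} > KK_2 = K_1K_2^2$ by reducing the two-parameter statement to two one-dimensional base cases joined by a monotonicity argument in $K_1$. Concretely, for a fixed $K_2 \geq 2$ I set $g(K_1) = \binom{K_1K_2}{K_2} - K_1K_2^2$, so that the goal becomes showing $g(K_1) > 0$ on the relevant range of $K_1$.

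First I would establish that $g$ is strictly increasing in $K_1$. Applying Pascal's rule repeatedly and telescoping gives
\[
\binom{(K_1+1)K_2}{K_2} - \binom{K_1K_2}{K_2} = \sum_{m=K_1K_2}^{K_1K_2+K_2-1} \binom{m}{K_2-1} \geq K_2\binom{K_1K_2}{K_2-1} \geq K_1K_2^2,
\]
where the last step uses the elementary bound $\binom{n}{k}\geq n$ for $1\leq k\leq n-1$. Since the subtracted term $K_1K_2^2$ grows by only $K_2^2$ when $K_1\to K_1+1$, this yields $g(K_1+1)-g(K_1)\geq (K_1-1)K_2^2 > 0$. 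Consequently it suffices to verify a single base value of $K_1$ in each regime: for $K_2=2$ I check $g(3)=\binom{6}{2}-12=3>0$ (note the excluded pair $(2,2)$ is exactly where $g(2)=\binom{4}{2}-8=-2<0$, so monotonicity cannot rescue it), and for $K_2\geq 3$ I check the base case $K_1=2$.

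The only delicate ingredient is that base case $K_1=2$, namely the central-binomial inequality $\binom{2K_2}{K_2} > 2K_2^2$ for $K_2\geq 3$. Crude estimates such as $\binom{n}{k}\geq (n/k)^k$ or $\binom{2m}{m}\geq 4^m/(2m+1)$ are too weak near the boundary and fail at $K_2=3,4$, so I would instead induct on $K_2$. The base case $K_2=3$ is $\binom{6}{3}=20>18$, and the inductive step uses the exact ratio $\binom{2m+2}{m+1}=\binom{2m}{m}\cdot\frac{2(2m+1)}{m+1}$; combined with the induction hypothesis it reduces to the polynomial inequality $2m^2(2m+1)\geq (m+1)^3$, i.e. $3m^3-m^2-3m-1\geq 0$, which holds for all $m\geq 2$. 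The main obstacle is precisely this tightness near $K_2=3$: any argument must be sharp enough to clear the narrow gap there, which is why I route the base case through an exact recursive identity rather than an off-the-shelf binomial estimate. Assembling the monotonicity step with the two base cases then covers every $(K_1,K_2)$ with $K_1,K_2\geq 2$ apart from $(2,2)$, completing the proof.
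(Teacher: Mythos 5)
Your proof is correct, but it follows a genuinely different route from the paper. The paper factorizes the binomial coefficient as a product of ratios,
\[
{K \choose K_2} = K\prod_{i=1}^{K_2-1}\frac{K-i}{K_2-i+1}
= KK_2\left(\frac{K_1-1+\frac{1}{K_2}}{2}\right)\prod_{i=1}^{K_2-2}\frac{K-i}{K_2-i+1},
\]
observes that every factor in the remaining product is at least $1$, and that the extracted factor $\bigl(K_1-1+\tfrac{1}{K_2}\bigr)/2$ exceeds $1$ whenever $K_1\geq 3$; this disposes of all cases with $K_1\geq 3$ in one stroke, while the cases $K_2=2$ and $K_1=2,\,K_2\geq 3$ are stated as directly checkable. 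Your argument instead runs a double induction: monotonicity of $g(K_1)=\binom{K_1K_2}{K_2}-K_1K_2^2$ in $K_1$ via Pascal's rule and telescoping, reducing everything to the base cases $(K_1,K_2)=(3,2)$ and $K_1=2,\,K_2\geq 3$, with the latter handled by induction on $K_2$ using the exact ratio $\binom{2m+2}{m+1}=\binom{2m}{m}\cdot\frac{2(2m+1)}{m+1}$. What each approach buys: the paper's factorization is shorter and makes the role of $K_1\geq 3$ transparent, but note that its extracted factor $\bigl(K_1-1+\tfrac{1}{K_2}\bigr)/2$ is \emph{less} than $1$ when $K_1=2$, so the product argument genuinely fails there and the paper falls back on an unproved ``easily check'' for exactly the case $K_1=2,\,K_2\geq 3$ --- which, as you correctly identify, is the tight regime ($\binom{6}{3}=20$ versus $18$) where crude binomial estimates break down. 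Your proof closes that gap rigorously, at the cost of more machinery; in that sense it is the more complete of the two.
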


\begin{proof}
We know,$${K \choose K_{2}} = \frac{{K!}}{{K_{2}!(K-K_{2})!}}$$ 
\begin{align*}
&= \frac{{(K)(K-1)\cdots(K-K_{2}+1)}}{{K_{2}(K_{2}-1)\cdots 2\cdot1}} \\
&=K\left[\frac {K-1}{K_{2}}\frac{K-2}{K_{2}-1} \cdots\frac{K-K_{2}+1}{2}\right] \\
  &=K\prod_{i=1}^{K_{2}-1} \frac{K-i}{K_{2}-i+1}
\end{align*}
For $K_2=2$, we can easily check that ${K \choose K_{2}} > KK_2$ if $K_1 \geq 3$. Now for $K_2 \geq 3$, we have
$${K \choose K_{2}} =KK_{2}\left(\frac{K_{1}-1+\frac{1}{K_{2}}}{2}\right)\prod_{i=1}^{K_{2}-2} \frac{K-i}{K_{2}-i+1}$$
We know, $K \geq K_{2}+1$, which implies that $ K-i \geq K_{2}-i+1$ and
$ \frac{K-i}{K_{2}-i+1} \geq 1.$
Hence, the term inside the product is always greater than one.
The term $\left(\frac{K_{1}-1+\frac{1}{K_{2}}}{2}\right)$ is always greater than one for $K_{1}\geq 3$. Therefore, we have 
$${K \choose K_2} \geq KK_{2}\left(\frac{K_{1}-1+\frac{1}{K_{2}}}{2}\right)>  KK_{2}.$$
Again for $K_1=2$ and $K_2 \geq 3$, we can easily check that ${K \choose K_2} > KK_{2}$.

\end{proof}

\begin{proof}[\textbf{Proof of Theorem \ref{thm2}}]
For integers $K_1 \geq 2$ and $K_2 \geq 2$ such that $K_1 > K_2$, we will show that $\frac{A-K}{A-\frac{1}{K_1}} < 0$. Since $\alpha>0$, it will imply that $\alpha>\frac{A-K}{A-\frac{1}{K_1}}$ and the proposed scheme lies in Region II. %We have 
%$$X= \frac{A-K}{A-\frac{1}{K_1}}=\frac{ K_2^2- \frac{K(K_2-1)}{{K \choose {K_2}}}-K}{ K_2^2- \frac{K(K_2-1)}{{K \choose {K_2}}}-\frac{1}{K_{1}}}.$$
We have 
$$A = K_2^2- \frac{K(K_2-1)}{{K \choose {K_2}}}= K_2^2- \frac{KK_2(1- \frac{1}{K_2})}{{K \choose {K_2}}} .$$ 
From Lemma \ref{lem1}, we have ${K \choose {K_2}} > KK_2$ for $K_1 \geq 2$ and $K_2 \geq 2$ except $K_1=K_2=2$. Therefore, we have
$$A > K_2^2 +1-\frac{1}{K_2},$$
and 
\begin{equation}\label{A1}
A - \frac{1}{K_1}> K_2^2 +1-\frac{1}{K_2}- \frac{1}{K_1}>0.
\end{equation}
Also, we can check for $K_1=K_2=2$, we have $A - \frac{1}{K_1} >0.$ Further, we have 
\begin{align}\label{A2}
A-K&=K_2^2- \frac{K(K_2-1)}{{K \choose {K_2}}} -K \nonumber \\
& = \frac{(K_2^2-K) {K\choose K_2}-K(K_2-1)}{{K\choose K_2}} \nonumber\\
& <0 \ \left( \text{as} \ K_1 > K_2 \ \text{and} \ K_2^2 - K <0 \right).
\end{align}
From \eqref{A1} and \eqref{A2}, we have $\frac{A-K}{A-\frac{1}{K_1}} < 0$.
\end{proof}

\subsection{Proof of Theorem \ref{thm3}}}

The following result can be directly obtained by Lemma \ref{lem1}.

\begin{lem}\label{cor1}
For integers $K_1 \geq 2$ and $K_2 \geq 2$ except $K_1=K_2=2$, if ${K \choose K_2} \leq K K_1$ then $K_1>K_2$,  where $K=K_1K_2$.
\end{lem}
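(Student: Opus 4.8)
The plan is to prove the statement by contraposition, since the hypothesis and conclusion are both simple inequalities in $K_1$ and $K_2$, and the desired implication becomes almost immediate once we invoke Lemma~\ref{lem1}. The contrapositive of ``if ${K \choose K_2} \leq K K_1$ then $K_1 > K_2$'' is the statement: if $K_1 \leq K_2$, then ${K \choose K_2} > K K_1$. I would aim to establish this latter form directly.

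First I would assume $K_1 \leq K_2$, and observe that multiplying both sides by the positive quantity $K = K_1 K_2$ gives $K K_1 \leq K K_2$. Next I would apply Lemma~\ref{lem1}, which guarantees ${K \choose K_2} > K K_2$ under exactly the hypotheses in force here, namely $K_1 \geq 2$, $K_2 \geq 2$, and $(K_1,K_2) \neq (2,2)$. Chaining these two facts yields
\[
{K \choose K_2} > K K_2 \geq K K_1,
\]
so that ${K \choose K_2} > K K_1$, which is the negation of the original hypothesis. This completes the contrapositive and hence the lemma.

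The argument is short because all the combinatorial work has already been absorbed into Lemma~\ref{lem1}; the only thing to verify carefully is that the exclusion of the edge case $K_1 = K_2 = 2$ transfers cleanly, which it does since the hypotheses of Lemma~\ref{cor1} and Lemma~\ref{lem1} coincide. I expect the main (and essentially only) point requiring attention is the direction of the contraposition: one must be careful to negate the conclusion $K_1 > K_2$ to the weak inequality $K_1 \leq K_2$, so that the case $K_1 = K_2$ is included and handled by the same monotonicity step $K K_1 \leq K K_2$. No separate case analysis or additional estimates beyond Lemma~\ref{lem1} should be needed.
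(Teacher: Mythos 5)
Your proof is correct and matches the paper's intent exactly: the paper gives no separate argument, stating only that the lemma is ``directly obtained'' from Lemma~\ref{lem1}, and your contrapositive chain ${K \choose K_2} > K K_2 \geq K K_1$ is precisely that one-line deduction (indeed the paper itself records this contrapositive form as Lemma~\ref{cor2}). Nothing is missing; the care you took with the weak inequality $K_1 \leq K_2$ is the right point to check.
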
  

The contrapositive statement of the Lemma \ref{cor1} is as follows, and this result will be used in the proof of Theorem \ref{thm3}.
\begin{lem}\label{cor2}
For integers $K_1 \geq 3$ and $K_2 \geq 2$ except $K_1=K_2=2$, if $K_1 \leq K_2$ then ${K \choose K_2} > K K_1$,  where $K=K_1K_2$.
\end{lem}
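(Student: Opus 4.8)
The plan is to recognize that Lemma \ref{cor2} is nothing more than the contrapositive of Lemma \ref{cor1}, so that no independent argument is needed; it suffices to invoke the logical equivalence of an implication with its contrapositive, together with the fact that Lemma \ref{cor1} has already been obtained from Lemma \ref{lem1}.

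First I would rewrite Lemma \ref{cor1} in explicit implicational form $P \Rightarrow Q$, where $P$ denotes the inequality $\binom{K}{K_2} \le K K_1$ and $Q$ denotes $K_1 > K_2$. Then I would negate each side: the failure of $Q$ is precisely $K_1 \le K_2$, and the failure of $P$ is precisely $\binom{K}{K_2} > K K_1$. Hence the contrapositive $\neg Q \Rightarrow \neg P$ reads ``if $K_1 \le K_2$ then $\binom{K}{K_2} > K K_1$,'' which is exactly the assertion of Lemma \ref{cor2}.

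The only point requiring care is that the hypotheses match up. Lemma \ref{cor2} is posed for $K_1 \ge 3$ and $K_2 \ge 2$, which is a subdomain of the range $K_1 \ge 2$, $K_2 \ge 2$, $(K_1,K_2) \neq (2,2)$ on which Lemma \ref{cor1} holds; indeed $K_1 \ge 3$ automatically excludes the pair $(2,2)$, so the extra exception clause is vacuous here. Since an implication and its contrapositive have the same truth value pointwise, the validity of Lemma \ref{cor1} over its domain forces the validity of Lemma \ref{cor2} over the smaller domain, and the proof is complete. There is essentially no obstacle in this argument: the entire content lies in correctly negating both the hypothesis and the conclusion of Lemma \ref{cor1} and in verifying that the parameter ranges are compatible.
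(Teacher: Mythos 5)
Your proposal is correct and matches the paper's own treatment exactly: the paper likewise presents Lemma~\ref{cor2} as the contrapositive of Lemma~\ref{cor1} (itself a direct consequence of Lemma~\ref{lem1}), with no further argument. Your added check that the domain $K_1 \geq 3$, $K_2 \geq 2$ sits inside the domain on which Lemma~\ref{cor1} holds, making the exception clause vacuous, is the only point of care, and you handle it correctly.
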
  

\begin{proof}[\textbf{Proof of Theorem \ref{thm3}}]
For integers $K_1 \geq 2$ and $K_2 \geq 2$ such that $K_1 \leq K_2$, we will show that $B>1$. Since $\alpha<1$, it will imply that $\alpha<B$ and the proposed scheme will never lie in Region III. We have
\begin{align*}
B-1 & = \frac{KK_1}{4} \left( \frac{{K \choose K_2}-4}{{K \choose K_2}-KK_1} \right) -1 \\
& = \frac{\frac{KK_1}{4}  \left( {K \choose K_2}-4 \right) - {K \choose K_2} + KK_1 }{{K \choose K_2}-KK_1} \\
& = \frac{{K \choose K_2}\left( \frac{KK_1-4}{4} \right) }{{K \choose K_2}-KK_1}
\end{align*}
From Lemma \ref{cor2}, we have ${K \choose {K_2}} > KK_1$ for $K_1 \geq 2$ and $K_2 \geq 2$ except $K_1=K_2=2$ if $K_1 \leq K_2$. Therefore, we have
$$B-1 > 0 \ \left(\text{as} \ {K \choose {K_2}} - KK_1 >0 \right).$$
\end{proof}

%\bibliographystyle{IEEEtran}
%\bibliography{library1.bib}

\end{document}